\theoremstyle{plain}
\newtheorem{theorem}{Theorem}[section]
\newtheorem{proposition}{Proposition}[section]
\theoremstyle{definition}
\newtheorem{example}{Example}[section]
\begin{document}

\title{Nonlinear constraints in nonholonomic mechanics}
\author{Paul Popescu and Cristian Ida}
\date{}
\maketitle

\begin{abstract}
 In this paper we have  obtained some dynamics equations, in the presence of
nonlinear nonholonomic constraints and according to a lagrangian and some
Chetaev-like conditions. Using some natural regular conditions, a simple
form of these equations is given. In the particular cases of linear and
affine constraints, one recovers the classical equations in the forms known
previously, for example, by Bloch and all \cite {Bl, BKMM}. The case of time-dependent
constraints is also considered. Examples of linear constraints, time
independent and time depenndent nonlinear constraints are considered, as
well as their dynamics given by suitable lagrangians. All  examples are
based on classical ones, such as those given by Appell's machine.
\end{abstract}

\begin{flushleft}
\strut \textbf{2010 Mathematics Subject Classification:} 70F25, 37J60, 70H45

\textbf{Key Words:} nonlinear, nonholonomic, constraints, Chetaev principle.
\end{flushleft}

\section{Introduction}

The geometrization of nonholonomic systems is a 
 historically outstanding
problem in mechanics and geometry (see, for example \cite{Leon}). In
general, the most  frequently  used and studied constraints in nonholonomic mechanics are
 the linear and affine ones (see, for example, \cite{Be1, Bl, BKMM, CILD, CLMM,
GG, GUG, LDV, Le, ML, SCS}). But nonlinear constraints are also involved in
nonholonomic mechanics (see, for example, \cite{Ben, Da, GLY, ILMD, KO, Kr,
Kr01, LMD, LiBe, Ma0, Ma}).

The study of equations of dynamics ruled by a lagrangian and some nonlinear
constraints is usually associated with Chetaev or generalized Chetaev
principles.  Some criticism concerning Chetaev's principle is  drawn, for
example, in \cite{Ma, Ma0}, where some situations (as Appell machine) are
presented as examples when Chetaev principle fails to a real situation.
Other authors use Chetaev principle in some special conditions, as for
example in \cite{LiBe}, as a generalized Chetaev principle, when the
constraint is homogeneous in the relative velocities and the constraints are
time dependent. Our goal in the paper is not to study the workability of
Chetaev or generalized Chetaev principle, but the possibility to put in an
unitary form the dynamics equations coming from linear, affine and regular
nonlinear constraints (Theorem \ref{thmain}).

The Chetaev principle, generally accepted in nonlinear constraint case,
comes from the following principle: take the variation before imposing the
constraints, that is, not imposing the constraints on the family of curves
defining the variation. In this case we follow similar arguments as in the
linear and affine constraints in \cite{Bl, BKMM} and we give a new form
expressed in Theorem \ref{thmain}. Adapting these results in the case of
time dependent nonlinear constraints, we obtain a similar general result
that applies in the cases of generalized Chetaev case \cite[Section 2]{LiBe}
or the example in \cite[Section 3]{LiBe}.

Some short preliminaries on foliations are given in the second section.
Nonlinear constraints, including linear and affine ones, are considered for
lagrangians in the third section using foliations, but following the
classical bundle setting as in \cite{Bl} for linear and affine constraints.
The implicit forms of constraints and a link with the Lagrange multipliers
form of Euler-Lagrange equation are also considered.

Given a nonlinear constraint $C$, then an almost transverse semi-spray gives
a $C$--semispray $S$ (Proposition \ref{prextsemisptr}) and a $C$--semispray $%
S$ gives rise to an $S$\emph{--curvature }$R$ of $C$ (Proposition \ref%
{prdefR}). An $S$\emph{--curvature }$R$ of $C$ is considered in the paper,
in the case of nonlinear constraints, since in the cases of linear and
affine constraints, the curvature $R$ of $C$ appears naturally defined \cite%
{Bl, BKMM}. A short form of nonholonomic lagrangian dynamics, subject to
linear and affine constraints, are presented in the third section, following 
\cite{Bl}; this is extended in the case of dynamics generated by $C$%
--regular Lagrangians, having nonlinear constraint systems. The main result
is Theorem \ref{thmain}, where a synthetic form of linear and
regular-nonlinear cases is given. This result can be adapted to other
situations; for example, in the case of time dependent constraints, or of a
time independent lagrangian (as studied in \cite{LiBe}, see Examples \ref{e3} and \ref{e6}
in the last section). In order to illustrate the constructions performed in
the paper, some examples are considered in the last section:

-- the Appell's linear constraints (as, for example, \cite{Ma}),

-- the Appell's nonlinear constraints (as, for example, \cite{LMD, ILMD}),

-- the Appell-Hammel dynamic system in an elevator (as considered in \eqref
{ex-AH}),

-- the Benenti mechanism \cite{Ben} (see also \cite{KO}),

-- the Marle servomechanism \cite{Ma0} (see also \cite{KO}),

-- a decelerated motion of a free particle in the presence of quadratic
constraints given by a riemannian flow, extending the euclidian case, as
studied in \cite{Kr, Sw}.

We use basic constructions and results on vector bundles and foliations from
classical sources \cite{La} and \cite{Mo} respectively; for the use of
vector bundles and foliations in mechanics one follow \cite{Bu-Mi} and \cite%
{Po1} respectively. We notice also that almost all formulas obtained in the paper,
  except for some explicit situations, have the same form in the simple
(i.e. fibered manifold) case, as well as in the foliated case. Throughout
the paper we pointed out the main constructions and formulas to the special
case of a simple foliation, when only the notations of geometrical objects
are different.

\section{Preliminaries on foliations}

Let us consider an $(n+m)$-dimensional manifold $M$, assumed to be connected
and orientable.

A codimension ${n}$ foliation $\mathcal{F}$ on $M$ is defined by a foliated
cocycle $\{U_{i},\varphi _{i},f_{i,j}\}$ such that:

\begin{enumerate}
\item[(i)] $\{U_i\}$, $i\in I$ is an open covering of $M$;

\item[(ii)] For every $i\in I$, $\varphi _{i}:U_{i}\rightarrow T$ are
submersions, where $T$ is an $n$-dimensional manifold, called \emph{%
transversal manifold};

\item[(iii)] The maps $f_{i,j}:\varphi_i(U_i\cap U_j)\rightarrow
\varphi_j(U_i\cap U_j)$ satisfy 
\begin{equation}
\varphi_j=f_{i,j}\circ\varphi_i  \label{I1}
\end{equation}
for every $(i,j)\in I\times I$ such that $U_i\cap U_j\neq\emptyset$.
\end{enumerate}

Every fibre of $\varphi _{i}$ is called a \textit{plaque} of the foliation.
Condition ({\ref{I1}}) says that, on the intersection $U_{i}\cap U_{j}$ the
plaques defined respectively by $\varphi _{i}$ and $\varphi _{j}$  coincide.
The manifold $M$ is decomposed into a family of disjoint immersed connected
submanifolds of dimension $m$; each of these submanifolds is called a 
\textit{leaf} of $\mathcal{F}$. If $U\subset M$ is an open subset, then a
foliation $\mathcal{F}$ on $M$ induces a foliation $\mathcal{F}_{U}$ on $U$,
called an \emph{induced foliation}.

We denote by $T\mathcal{F}$ the tangent bundle to $\mathcal{F}$ and by $%
\Gamma (T\mathcal{F})$ the module of its global sections, i.e. the vector
fields on $M$ tangent to $\mathcal{F}$. The \emph{normal bundle} of $%
\mathcal{F}$ is $N\mathcal{F}=TM/T\mathcal{F}$. A vector field on $M$ is 
\emph{transverse} if it locally projects to the transversal manifold.

A system of local coordinates adapted to $\mathcal{F}$ are coordinates $%
(x^{u},x^{\bar{u}})$, $u=1,\ldots ,m$, $\bar{u}=1,\ldots ,n$ on an open
subset $U$, where $\mathcal{F}_{U}$ is trivial and defined by the equations $%
dx^{\bar{u}}=0,\,\bar{u}=1,\ldots ,{n}$.

A particular example of a foliation is a \emph{fibered manifold}, when the
leaves are the fibers of a surjective submersion $\pi:M\rightarrow M^{\prime }$%
; this is called a \emph{simple foliation}. A particular example of a
fibered manifold is a \emph{locally trivial fibration}. There are elementary
examples of simple foliations that come from no trivial fibrations and the
spaces of leaves are not Hausdorff separated. For example, considering the
natural projection $\pi _{1}:I\!\!R^{2}\rightarrow I\!\!R$, $(x,y)\overset{%
\pi _{1}}{\rightarrow }{}x$ one  obtains a foliation $\mathcal{F}$ on $%
I\!\!R^{2}$; but on $U=I\!\!R^{2}\backslash \{(x,0)|x\geq 0\}\subset
I\!\!R^{2}$ the induced foliation $\mathcal{F}_{U}$ is not a locally trivial
fibration and the space of leaves is not Hausdorff separated (even if the
leaves are fibers of a surjective submersion). According to the above
conventions, the coordinates are denoted by $x=x^{\bar{1}}$ and $y=x^{1}$.

In the case when a foliation $\mathcal{F}$ on $M$ is simple, i.e. the leaves
on $M$ are the fibers of a submersion $\pi :M\rightarrow M^{\prime }$, the
equations have the same form as in the case of a general foliation. The main
difference between  the simple and foliated cases is that in the simple case, the
transverse coordinates $(x^{\bar{u}})$ are coordinates on an open subset $%
U^{\prime }=\pi\left( U\right) \subset M^{\prime }$, where $U\subset M$
has as coordinates $(x^{u},x^{\bar{u}})$, while in the foliated case it is
possible that $\pi $ may not be in any way as a global map (i.e. on $M$),
but locally there are surjective submersions $\pi _{U}:U\rightarrow
U^{\prime }$, where $U^{\prime }\subset T$ is an open subset of the
transversal manifold. But the formulas we obtain in the paper have the same
form in both simple and foliated cases. 

\section{Linear, affine and nonlinear constraints and Lagrangians}

\subsection{Linear and affine constraints}

In this subsection we perform a simple transcription of the fiber bundle
case (as for example in \cite{Bl, BKMM}) to the foliations case.

A \emph{linear constraint} system of a foliation $\mathcal{F}$ is a left
splitting of the inclusion $T\mathcal{F}\overset{I_{0}}{\rightarrow }{}TM$.
Since there is a short exact sequence of vector bundle morphisms%
\begin{equation}
0\rightarrow T\mathcal{F}\overset{I_{0}}{\rightarrow }{}TM\overset{\Pi _{0}}{%
\rightarrow }{}N\mathcal{F}\rightarrow 0,  \label{exseq}
\end{equation}%
it follows that the existence  of a left splitting $C$ of $I_{0}$ is equivalent
 to the existence  of a right splitting $D$ of the projection $\Pi _{0}$, thus
 with an inclusion of $N\mathcal{F}$ in $TM$, via the injective morphism $D$, that
gives a decomposition 
\begin{equation*}
TM=T\mathcal{F}\oplus H\mathcal{F},
\end{equation*}%
where $H\mathcal{F}=D(N\mathcal{F})$. The \emph{curvature of }$C$ is the
bilinear map $B:\Gamma (N\mathcal{F})\times \Gamma (N\mathcal{F})\rightarrow
\Gamma (T\mathcal{F})$ given by%
\begin{equation}
B\left( X,Y\right) =C\left( \left[ D(X),D(Y)\right] \right) .
\label{curvat01}
\end{equation}%
The condition that a section $\bar{X}\in \Gamma (N\mathcal{F})$ be a \emph{%
transverse field} is that for every vector fields $X,Y\in \mathcal{X}(M)$
such that $\bar{X}=\Pi _{0}(X)$ and $Y\in \Gamma (T\mathcal{F})$, then $%
\left[ X,Y\right] \in \Gamma (T\mathcal{F})$; we say that $D(\bar{X})\in
\Gamma (D(N\mathcal{F}))$ is the \emph{horizontal lift} of $\bar{X}$. Thus if $%
\bar{X},\bar{Y}\in \Gamma (N\mathcal{F})$ are transverse, the curvature has
the form 
\begin{equation}
B\left( \bar{X},\bar{Y}\right) =C\left( \left[ \bar{X}^{h},\bar{Y}^{h}\right]
\right) .  \label{curvat02}
\end{equation}

Using local coordinates $(x^{u},x^{\bar{u}})$ on $M$ and the corresponding
ones $(y^{u},y^{\bar{u}})$ on the fibers of $TM$, a linear constraint $C$
has the local form 
\begin{equation}
(x^{u},x^{\bar{u}},y^{u},y^{\bar{u}})\overset{C}{\rightarrow }{}(x^{u},x^{%
\bar{u}},y^{u}+C_{\bar{u}}^{u}(x^{u},x^{\bar{u}})y^{\bar{u}})
\label{locflinc}
\end{equation}%
and the corresponding $D$ is%
\begin{equation}
(x^{u},x^{\bar{u}},y^{\bar{u}})\overset{D}{\rightarrow }{}(x^{u},x^{\bar{u}%
},-C_{\bar{u}}^{u}(x^{u},x^{\bar{u}})y^{\bar{u}},y^{\bar{u}}).
\label{locflind}
\end{equation}

The curvature $B$ of $C$ has the local form%
\begin{eqnarray}
B_{\bar{u}\bar{v}}^{u}\dfrac{\partial }{\partial x^{u}} &=&B\left( \frac{%
\delta }{\delta x^{\bar{u}}},\frac{\delta }{\delta x^{\bar{v}}}\right) =%
\left[ \frac{\delta }{\delta x^{\bar{u}}},\frac{\delta }{\delta x^{\bar{v}}}%
\right]   \label{eqcurvat} \\
&&=\left(\frac{\partial C_{\bar{u}}^{u}}{\partial x^{\bar{v}}}-\frac{\partial C_{%
\bar{v}}^{u}}{\partial x^{\bar{u}}}+C_{\bar{v}}^{v}\frac{\partial C_{\bar{u}%
}^{u}}{\partial x^{v}}-C_{\bar{u}}^{v}\frac{\partial C_{\bar{v}}^{u}}{%
\partial x^{v}}\right)\frac{\partial}{\partial x^u},  \notag
\end{eqnarray}%
where%
\begin{equation*}
\frac{\delta }{\delta x^{\bar{u}}}={\left( \dfrac{\partial }{\partial x^{%
\bar{u}}}\right) ^{h}}=\dfrac{\partial }{\partial x^{\bar{u}}}-C_{\bar{u}%
}^{u}(x^{u},x^{\bar{u}})\dfrac{\partial }{\partial x^{u}}.
\end{equation*}

In the case when the foliation $\mathcal{F}$ is simple, given by the fibers
of a fibered manifold $\pi :M\rightarrow M^{\prime }$, then $T\mathcal{F}$
is just the vertical bundle $VM=\ker \pi _{\ast }$, where $\pi _{\ast
}:TM\rightarrow TM^{\prime }$  is the differential map of $\pi $, and a linear
constraint $C$ is just an Ehresmann connection on $M$. The vector bundle $N%
\mathcal{F}$ is isomorphic  to the quotient vector bundle $TM/VM\overset{%
not.}{=}{}NM$; denote by $\pi _{NM}:NM\rightarrow M$ the canonical
projection. If the fibered manifold is locally trivial, then the vector
bundle $N\mathcal{F}$ is  canonically isomorphic with the induced vector bundle 
$\pi ^{\ast }TM^{\prime }$.

As an example, we consider the linear Appell constraints (see, for example, 
\cite{Ma}). The manifold is $M=I\!\!R^{3}\times T^{2}$ and the foliation is
the simple foliation defined by the fibers of the canonical projection $%
I\!\!R^{3}\times T^{2}\rightarrow T^{2}$. Consider the coordinates $%
(x^{1},x^{2},x^{3})$ on $I\!\!R^{3}$ and $(x^{\bar{1}},x^{\bar{2}})$ on $%
T^{2}$. The linear Appell constraints are given by the formulas 
\begin{equation}
C^{1}=Ry^{\bar{1}}\cos x^{\bar{2}},C^{2}=Ry^{\bar{1}}\sin x^{\bar{2}%
},C^{3}=ry^{\bar{1}}.  \label{exlinAppC}
\end{equation}%
Using formulas (\ref{eqcurvat}), its curvature $B$ has the coefficients 
\begin{equation}
B_{\bar{1}\bar{2}}^{1}=-R\sin x^{\bar{2}},B_{\bar{1}\bar{2}}^{2}=R\cos x^{%
\bar{2}},B_{\bar{1}\bar{2}}^{3}=0.  \label{CurvApp}
\end{equation}

An \emph{affine constraint} system of a foliation $\mathcal{F}$ is a fibered
map $D^{\prime }:N\mathcal{F}\rightarrow TM$ affine on fibers. One can
decompose $D^{\prime }$ as 
\begin{equation*}
D^{\prime }(\bar{X})=D(\bar{X})-b,
\end{equation*}%
where $D$ comes from a linear constraint $C:TM\rightarrow T\mathcal{F}$ and $%
b\in \Gamma (T\mathcal{F})$ is a tangent vector field to $\mathcal{F}$. We
can  also define a map $C^{\prime }:TM\rightarrow T\mathcal{F}$, by $%
C^{\prime }(X)=C(X)+b$. In the affine case,  it can be easily seen that giving $C$ and $b$ is equivalent
 to giving $D$ and $b$.

In  a similar way one can extend the definition of an adapted Lagrangian $L$%
, asking that $L$ has the form 
\begin{equation}
L\left( X\right) =L_{0}(C^{\prime }(X))+\bar{L}\left( \Pi _{0}(X)\right)
,X\in \mathcal{X}(\widetilde{TM}),  \label{locflag'}
\end{equation}%
where $C^{\prime }$ is an affine constraint and $\widetilde{TM}=TM-\{\mathrm{%
zero\,\,section\}}$.

According to \cite[Ch. 5]{Bl}, a covariant derivative of $b$, along a
horizontal vector field $\bar{X}\in \Gamma (D(N\mathcal{F}))$, can be
considered as a vector field $\nabla _{\bar{X}}b\in \mathcal{X}(M)$ that
projects by $\Pi _{0}$ on $\bar{X}$. Using local coordinates, if a linear
constraint has the local form (\ref{locflind}), then $C$ (corresponding to $%
D $) and $D^{\prime }$ have the forms (\ref{locflinc}) and 
\begin{equation}
(x^{u},x^{\bar{u}},y^{\bar{u}})\overset{D^{\prime }}{\rightarrow }%
{}(x^{u},x^{\bar{u}},b^{u}(x^{u},x^{\bar{u}})-C_{\bar{u}}^{u}(x^{u},x^{\bar{u%
}})y^{\bar{u}})  \label{locflind'}
\end{equation}%
respectively. If 
\begin{equation*}
\bar{X}=\bar{X}^{u}\left( \dfrac{\partial }{\partial x^{\bar{u}}}-C_{\bar{u}%
}^{u}\dfrac{\partial }{\partial x^{u}}\right) ,
\end{equation*}%
then 
\begin{equation*}
\left( x^{u},x^{\bar{u}}\right) \overset{\nabla _{\bar{X}}b}{\rightarrow }%
{}\left( x^{u},x^{\bar{u}},\bar{X}^{\bar{v}}\gamma _{\bar{v}}^{u},\bar{X}^{%
\bar{u}}\right) ,
\end{equation*}%
where%
\begin{equation*}
\gamma _{\bar{u}}^{u}=\frac{\partial b^{u}}{\partial x^{\bar{u}}}-C_{\bar{u}%
}^{v}\frac{\partial b^{u}}{\partial x^{v}}+b^{v}\frac{\partial C_{\bar{u}%
}^{u}}{\partial x^{v}}.
\end{equation*}

Using nonlinear constraints approach, studied in the next sections, the
curvature $R$ of $D$ can be expressed also by the formula%
\begin{equation*}
R_{\bar{u}}^{u}\dfrac{\partial }{\partial x^{u}}=-\left[ \left[ \dfrac{%
\partial }{\partial y^{\bar{u}}},C_{V}\right] ,C_{V}\right] =\left( \gamma _{%
\bar{u}}^{u}+B_{\bar{u}\bar{v}}^{u}y^{\bar{v}}\right) \dfrac{\partial }{%
\partial x^{u}},
\end{equation*}%
where $C_{V}=\left( C_{\bar{u}}^{u}y^{\bar{u}}-b^{u}\right) \dfrac{\partial 
}{\partial x^{u}}+y^{\bar{u}}\dfrac{\partial }{\partial x^{\bar{u}}}$.
According to Proposition \ref{prdefR} in the next section, the curvature is
represented by a global tensor $R\in L(VN\mathcal{\mathcal{F}},T\mathcal{F}%
_{N\mathcal{F}}\mathcal{)}$, while $C_{V}$ is not a tensor, having only a
local vector field form. This curvature and the covariant derivative $\nabla 
$ play an important role in \cite[Sect. 5.2]{Bl} to express the nonholonomic
equations of motion, in the case of linear and affine constraints.

\subsection{Non-linear constraints}

In this subsection we extend the linear and affine constraints studied in
the previous subsection, to nonlinear constraints.

Let us consider the endomorphism $\tilde{J}\in End\left( TN\mathcal{F}%
\right) $, induced by the projection of the canonical almost tangent
structure $J\in End\left( TTM\right) $. Let $VN\mathcal{F}$ be the vertical
vector bundle of $N\mathcal{F}$ and $\Gamma _{0}\in \Gamma (VN\mathcal{F})$
be the transverse Liouville vector field. Using local coordinates, 
\begin{equation}
\dfrac{\partial }{\partial x^{u}}\overset{\tilde{J}}{\rightarrow }{}0,\dfrac{%
\partial }{\partial x^{\bar{u}}}\overset{\tilde{J}}{\rightarrow }{}\dfrac{%
\partial }{\partial y^{\bar{u}}},\dfrac{\partial }{\partial y^{\bar{u}}}%
\overset{\tilde{J}}{\rightarrow }{}0,\;\Gamma _{0}=y^{\bar{u}}\dfrac{%
\partial }{\partial y^{\bar{u}}}  \label{projpi}
\end{equation}%
and the local sections of $\Gamma (VN\mathcal{F})$ are spanned by $\{\frac{%
\partial }{\partial y^{\bar{u}}}\}$.

We say that a map $C:N\mathcal{F}\rightarrow TM$, viewed also as a section $%
C\in \Gamma (\pi _{N\mathcal{F}}^{\ast }TM)$, is a \emph{nonlinear constraint%
} if $\tilde{J}\left( C\right) =\Gamma _{0}$. Using local coordinates,%
\begin{equation}
(x^{u},x^{\bar{u}},y^{\bar{u}})\overset{C}{\rightarrow }{}(x^{u},x^{\bar{u}%
},C^{u}(x^{v},x^{\bar{v}},y^{\bar{v}}),y^{\bar{u}}),C=C^{u}\dfrac{\partial }{%
\partial x^{u}}+y^{\bar{u}}\dfrac{\partial }{\partial x^{\bar{u}}}.
\label{eqdefC}
\end{equation}

\begin{proposition}
\label{prconsdoi}A nonlinear constraint  gives rise to a left splitting $%
C^{\prime \prime }$ or, equivalently,  to a right splitting $D^{\prime \prime }$
of the exact sequence of vector bundle morphisms%
\begin{equation}
0\rightarrow \pi _{N\mathcal{F}}^{\ast }T\mathcal{F}\overset{I_{0}^{\prime
\prime }}{\rightarrow }{}\pi _{N\mathcal{F}}^{\ast }TM\overset{\Pi
_{0}^{\prime \prime }}{\rightarrow }{}\pi _{N\mathcal{F}}^{\ast }N\mathcal{F}%
\rightarrow 0.  \label{exseq011}
\end{equation}
\end{proposition}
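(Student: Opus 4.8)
The plan is to build the right splitting $D''$ directly as the \emph{fibre derivative} of the constraint map $C$, and then to recover the left splitting $C''$ from it through the standard correspondence between left and right splittings of a short exact sequence of vector bundles. First I would unwind the defining condition $\tilde J(C)=\Gamma_{0}$: comparing with \eqref{projpi}, it forces the transverse part of $C$ to be the tautological one, so that in the local form \eqref{eqdefC} the $\partial/\partial x^{\bar u}$--component is exactly $y^{\bar u}$; intrinsically this says $\Pi_{0}\circ C=\mathrm{id}_{N\mathcal F}$, i.e.\ $C$ is a (generally non fibrewise-linear) section of $\Pi_{0}\colon TM\to N\mathcal F$ which in particular is fibred over $M$. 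I would also note that $\pi_{N\mathcal F}^{*}$ carries the exact sequence \eqref{exseq} to an exact sequence, since fibrewise it only relocates the fibres of \eqref{exseq} along $\pi_{N\mathcal F}$; hence \eqref{exseq011} is genuinely short exact, and the reason for working over $N\mathcal F$ rather than over $M$ is precisely that the splitting we are about to construct depends on the chosen point $v\in N\mathcal F$, not only on $\pi_{N\mathcal F}(v)\in M$.

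Next, fix $v\in N\mathcal F$ lying over $x=\pi_{N\mathcal F}(v)\in M$. Since $C$ is fibred over $M$ it restricts to a smooth map $C_{x}\colon N_{x}\mathcal F\to T_{x}M$ between vector spaces, and I would define
\[
D''_{v}\colon N_{x}\mathcal F\longrightarrow T_{x}M,\qquad D''_{v}(w)=\left.\frac{d}{dt}\right|_{t=0}C(v+tw),
\]
i.e.\ the derivative of $C_{x}$ at $v$ read through the canonical vertical-lift identifications $T_{v}(N_{x}\mathcal F)\cong N_{x}\mathcal F$ and $T_{C(v)}(T_{x}M)\cong T_{x}M$. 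In adapted coordinates $D''_{v}\bigl(w^{\bar u}\tfrac{\partial}{\partial x^{\bar u}}\bigr)=\tfrac{\partial C^{u}}{\partial y^{\bar v}}(v)\,w^{\bar v}\tfrac{\partial}{\partial x^{u}}+w^{\bar u}\tfrac{\partial}{\partial x^{\bar u}}$, which is manifestly linear in $w$ and smooth in $v$, so $v\mapsto D''_{v}$ is a vector bundle morphism $D''\colon\pi_{N\mathcal F}^{*}N\mathcal F\to\pi_{N\mathcal F}^{*}TM$. Differentiating the identity $\Pi_{0}\circ C_{x}=\mathrm{id}_{N_{x}\mathcal F}$ at $v$, and using that $\Pi_{0}$, being a vector bundle morphism, equals its own fibre derivative, gives $\Pi_{0}''\circ D''=\mathrm{id}$; thus $D''$ is a right splitting of $\Pi_{0}''$. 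When $C$ is linear or affine, $C^{u}$ is (affine-)linear in the $y^{\bar v}$, so $\tfrac{\partial C^{u}}{\partial y^{\bar v}}$ is constant along each fibre of $N\mathcal F$ and $D''$ is the pullback to $N\mathcal F$ of the splitting $D$ of \eqref{locflind} (resp.\ $D'$ of \eqref{locflind'}); no information is lost in passing from $M$ to $N\mathcal F$ there.

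Finally I would set $C'':=\mathrm{id}_{\pi_{N\mathcal F}^{*}TM}-D''\circ\Pi_{0}''$. Since $\Pi_{0}''\circ C''=\Pi_{0}''-(\Pi_{0}''\circ D'')\circ\Pi_{0}''=0$, its image lies in $\ker\Pi_{0}''=\operatorname{im}I_{0}''$, so $C''$ factors uniquely as $I_{0}''\circ\overline C''$ with $\overline C''\colon\pi_{N\mathcal F}^{*}TM\to\pi_{N\mathcal F}^{*}T\mathcal F$; and $I_{0}''\circ\overline C''\circ I_{0}''=C''\circ I_{0}''=I_{0}''-D''\circ(\Pi_{0}''\circ I_{0}'')=I_{0}''$ together with the injectivity of $I_{0}''$ gives $\overline C''\circ I_{0}''=\mathrm{id}$, so $\overline C''$ (which I also denote $C''$) is a left splitting of $I_{0}''$. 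The relation $I_{0}''\circ C''+D''\circ\Pi_{0}''=\mathrm{id}$ then sets up the usual bijection $D''\leftrightarrow C''$ between right splittings of $\Pi_{0}''$ and left splittings of $I_{0}''$, which is exactly the asserted equivalence. The one step needing genuine care is the bookkeeping with the canonical vertical-lift isomorphisms that lets the fibre derivative of $C$ be read as an honest morphism into $\pi_{N\mathcal F}^{*}TM$ (rather than into $TTM$); once that identification is set up, the remainder is just the fibrewise linear algebra of splitting a short exact sequence.
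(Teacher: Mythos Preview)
Your argument is correct and is, at bottom, the same construction as the paper's: the splitting is built from the fibre derivatives $\partial C^{u}/\partial y^{\bar u}$ of the constraint map. The paper simply writes down the left splitting in coordinates (formula~\eqref{formDsec}) and asserts that the change-of-chart check goes through, whereas you give the intrinsic description of $D''$ as the vertical (fibre) derivative of $C$, verify $\Pi_0''\circ D''=\mathrm{id}$ by differentiating $\Pi_0\circ C=\mathrm{id}$, and then pass to $C''$ by the standard $C''=\mathrm{id}-D''\circ\Pi_0''$ correspondence. One small point worth noting: your $C''$ has the opposite sign in front of $\tfrac{\partial C^{u}}{\partial y^{\bar u}}X^{\bar u}$ compared with the paper's formula~\eqref{formDsec}; both are perfectly good left splittings of $I_0''$, and in fact it is \emph{your} $D''$ (with the $+$ sign) that is used later in Proposition~\ref{prdefh} when $h=(D'')^{*}H_L$ is computed.
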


\begin{proof} Using local coordinates, it can be proved that the map 
\begin{equation}
X^{u}\dfrac{\partial }{\partial x^{u}}+X^{\bar{u}}\dfrac{\partial }{\partial
x^{\bar{u}}}\overset{C^{\prime }}{\rightarrow }{}\left( X^{u}+\dfrac{%
\partial C^{u}}{\partial y^{\bar{u}}}X^{\bar{u}}\right) \dfrac{\partial }{%
\partial x^{u}}  \label{formDsec}
\end{equation}%
gives a left splitting of $I_{0}^{\prime \prime }$. 
\end{proof}

It follows that there is an inclusion of $\pi _{N\mathcal{F}}^{\ast }N%
\mathcal{F}$ in $\pi _{N\mathcal{F}}^{\ast }TM$, via the injective morphism $%
D^{\prime \prime }$, that gives a decomposition 
\begin{equation}
\pi _{N\mathcal{F}}^{\ast }TM=\pi _{N\mathcal{F}}^{\ast }T\mathcal{F}\oplus
N^{\prime \prime }\mathcal{F},  \label{nonldecomp}
\end{equation}%
where $N^{\prime \prime }\mathcal{F}=D^{\prime \prime }(\pi _{N\mathcal{F}%
}^{\ast }N\mathcal{F})$.

Assume  now that the foliation $\mathcal{F}$ is simple and the leaves are the
fibers of a fibered manifold $\pi :M\rightarrow M^{\prime }$. Then the lifts
of the Liouville vector field and the almost tangent structure on $M^{\prime
}$ to the vertical vector bundle $VNM=V(NM)$ of $NM=TM/VM$ are $\Gamma _{0}$
and $\tilde{J}$ respectively. A nonlinear constraint is a fibered manifold
map $C:VM\rightarrow TM$, viewed also as a section $C\in \Gamma (\pi
_{NM}^{\ast }TM)$, such that $\tilde{J}\left( C\right) =\Gamma _{0}$. Notice
also that the exact sequence of vector bundle morphisms (\ref{exseq011})
have the following form%
\begin{equation}
0\rightarrow \pi _{NM}^{\ast }VM\overset{I_{0}^{\prime \prime }}{\rightarrow 
}{}\pi _{NM}^{\ast }TM\overset{\Pi _{0}^{\prime \prime }}{\rightarrow }{}\pi
_{NM}^{\ast }NM\rightarrow 0.  \label{exseq011s}
\end{equation}

We deal now with an implicit realization of nonlinear constraints\emph{.}

Let $F:TM\rightarrow T\mathcal{F}$ be a fibered manifold map over the base $%
M $. Using local coordinates, $F$ has the form%
\begin{equation*}
(x^{u},x^{\bar{u}},y^{u},y^{\bar{u}})\overset{F}{\rightarrow }{}(x^{u},x^{%
\bar{u}},F^{u}(x^{v},x^{\bar{v}},y^{v},y^{\bar{v}})).
\end{equation*}%
Let us notice that the property of a point $z\in TM$, of coordinates $%
(x^{v},x^{\bar{v}},y^{v},y^{\bar{v}})$, to have $F^{u}(x^{v},x^{\bar{v}%
},y^{v},y^{\bar{v}})=0$, does not depend on coordinates; let us denote as $%
\mathcal{C}_{F}$ the set of these points.

We also say that $F$ is a \emph{contravariant implicit constraint }(or a 
\emph{con--constraint} for short) if

\begin{enumerate}
\item for every $x\in M$ and any transverse vector $\bar{X}_{x}\in N_{x}%
\mathcal{F}$, there is a point in $T_{x}M\cap \mathcal{C}_{F}$ that projects
on $\bar{X}_{x}$;

\item the local matrices $\left( \dfrac{\partial F^{u}}{\partial y^{v}}%
(z)\right) $ are non-singular in all $z\in \mathcal{C}_{F}$.
\end{enumerate}

By the implicit mapping theorem and using local coordinates, these
conditions can be read that the local equations $F^{u}(x^{v},x^{\bar{v}%
},y^{v},y^{\bar{v}})=0$ can be solved with respect to $y^{v}$, giving local
functions $(x^{u},x^{\bar{u}},y^{\bar{u}})\rightarrow C^{u}(x^{u},x^{\bar{u}%
},y^{\bar{u}})$ in a neighborhood of any point in $N\mathcal{F}$, such that $%
F^{u}(x^{v},x^{\bar{v}},C^{v},y^{\bar{v}})=0$. Finally, we obtain local
nonlinear constraints $C_{U}:N\mathcal{F}_{U}\rightarrow TU$, where $%
U\subset M$ are open sets that cover $M$.

Consider now the covariant case.

Let $G:TM\rightarrow T^{\ast }\mathcal{F}$ be a fibered manifold map over
the base $M$. Using local coordinates, $G$ has the form%
\begin{equation*}
(x^{u},x^{\bar{u}},y^{u},y^{\bar{u}})\overset{G}{\rightarrow }{}(x^{u},x^{%
\bar{u}},G_{u}(x^{v},x^{\bar{v}},y^{v},y^{\bar{v}})).
\end{equation*}%
As in the contravariant case, the property of a point $z\in TM$, that the
coordinates $(x^{v},x^{\bar{v}},y^{v},y^{\bar{v}})$ fulfill $G_{u}(x^{v},x^{%
\bar{v}},y^{v},y^{\bar{v}})=0$, does not depend on coordinates; we denote
also by $\mathcal{C}_{G}$ the set of these points.

We say that $G$ is a \emph{covariant implicit constraint }(or a \emph{%
cov--constraint} for short) if

\begin{enumerate}
\item for every $x\in M$ and any transverse vector $\bar{X}_{x}\in N_{x}%
\mathcal{F}$, there is a point in $T_{x}M\cap \mathcal{C}_{G}$ that projects
on $\bar{X}_{x}$;

\item the local matrices $\left( \dfrac{\partial G_{u}}{\partial y^{v}}%
(z)\right) $ are non-singular in all $z\in \mathcal{C}_{G}$.
\end{enumerate}

These conditions can be read that the local equations $G_{u}(x^{v},x^{\bar{v}%
},y^{v},y^{\bar{v}})=0$ can be solved with respect to $y^{u}$, giving local
functions $(x^{u},x^{\bar{u}},y^{\bar{u}})\rightarrow C^{u}(x^{u},x^{\bar{u}%
},y^{\bar{u}})$ in a neighborhood of any point in $N\mathcal{F}$, such that $%
G_{u}(x^{v},x^{\bar{v}},C^{v},y^{\bar{v}})=0$. Finally, as in the
contravariant case, we obtain local nonlinear constraints $C_{U}:N\mathcal{F}%
_{U}\rightarrow TU$, where $U\subset M$ are open sets that cover $M$.

The implicit form of constraints can be used to give an invariant form to
the condition that a covector type be a combination of partial derivatives
of functions that give the constraints; for example, in nonholonomic
mechanics, the Chetaev condition reads that the covector giving the
Euler-Lagrange derivative is such a combination.

\begin{proposition}
\label{primpl}Let $E:T\mathcal{F}\rightarrow T^{\ast }M$ be a fibered
manifold map over the base $M$. If $E=E_{u}dx^{u}+E_{\bar{u}}dx^{\bar{u}}$
has the property%
\begin{equation*}
E_{u}=\sum\limits_{v}\lambda _{v}\dfrac{\partial F^{v}}{\partial y^{u}}%
(x^{u},x^{\bar{u}},C^{u},y^{\bar{u}}),E_{\bar{u}}=\sum\limits_{v}\lambda _{v}%
\dfrac{\partial F^{v}}{\partial y^{\bar{u}}}(x^{u},x^{\bar{u}},C^{u},y^{\bar{%
u}})
\end{equation*}%
on $\mathcal{C}_{F}$,  for a given con-constraint $F$, or%
\begin{equation*}
E_{u}=\sum\limits_{v}\lambda ^{v}\dfrac{\partial F_{v}}{\partial y^{u}}%
(x^{u},x^{\bar{u}},C^{u},y^{\bar{u}}),E_{\bar{u}}=\sum\limits_{v}\lambda ^{v}%
\dfrac{\partial F_{v}}{\partial y^{\bar{u}}}(x^{u},x^{\bar{u}},C^{u},y^{\bar{%
u}})
\end{equation*}%
on $\mathcal{C}_{F}$,   for a given cov-constraint $F$, then the following
identity holds true: 
\begin{equation*}
E_{\bar{u}}+\sum\limits_{u}\dfrac{\partial C^{u}}{\partial y^{\bar{u}}}%
E_{u}=0.
\end{equation*}
\end{proposition}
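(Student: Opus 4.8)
The plan is to reduce both the contravariant and covariant cases to a single differentiation identity satisfied by the functions $C^{u}$ along the constraint submanifold, and then contract that identity against the multipliers. First, recall that in the con-constraint case the functions $C^{u}=C^{u}(x^{w},x^{\bar w},y^{\bar w})$ are, by the implicit mapping theorem, characterized by the identities
\begin{equation*}
F^{v}\bigl(x^{w},x^{\bar w},C^{w}(x,x,y^{\bar w}),y^{\bar w}\bigr)=0,\qquad v=1,\dots,m,
\end{equation*}
holding identically in $(x^{w},x^{\bar w},y^{\bar w})$. Differentiating this identity with respect to $y^{\bar u}$ and using the chain rule, one gets for each $v$
\begin{equation*}
\sum_{w}\frac{\partial F^{v}}{\partial y^{w}}\frac{\partial C^{w}}{\partial y^{\bar u}}+\frac{\partial F^{v}}{\partial y^{\bar u}}=0,
\end{equation*}
where all partials of $F^{v}$ are evaluated at the point $(x^{w},x^{\bar w},C^{w},y^{\bar w})\in\mathcal C_{F}$. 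This is the single structural identity I will exploit; the analogous identity with $F^{v}$ replaced by $G_{v}$ holds verbatim in the covariant case, since the defining relations $G_{v}(x,x,C^{w},y^{\bar w})=0$ are of exactly the same shape.

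Next I substitute the hypothesis on $E$. In the contravariant case, $E_{u}=\sum_{v}\lambda_{v}\,\partial F^{v}/\partial y^{u}$ and $E_{\bar u}=\sum_{v}\lambda_{v}\,\partial F^{v}/\partial y^{\bar u}$, all evaluated on $\mathcal C_{F}$. Therefore
\begin{equation*}
E_{\bar u}+\sum_{u}\frac{\partial C^{u}}{\partial y^{\bar u}}E_{u}
=\sum_{v}\lambda_{v}\left(\frac{\partial F^{v}}{\partial y^{\bar u}}+\sum_{u}\frac{\partial C^{u}}{\partial y^{\bar u}}\frac{\partial F^{v}}{\partial y^{u}}\right),
\end{equation*}
and the parenthesis is precisely the left-hand side of the differentiation identity above (with the dummy index $w$ renamed $u$), hence vanishes term by term in $v$. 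Thus the whole sum is zero, which is the claimed identity. The covariant case is identical after replacing $\lambda_{v}$ by $\lambda^{v}$ and $F^{v}$ by $F_{v}$.

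The only subtlety — and the one place the argument needs a little care rather than being purely formal — is making sure the chain-rule differentiation is legitimate and that the evaluation points match: the identity $F^{v}(x,x,C^{w}(x,x,y^{\bar w}),y^{\bar w})=0$ holds on an open neighborhood in $N\mathcal F$, so it may be differentiated freely there, and the non-singularity of $(\partial F^{u}/\partial y^{v})$ on $\mathcal C_{F}$ (condition 2 in the definition of a con-/cov-constraint) is exactly what guaranteed the existence and smoothness of the $C^{u}$ in the first place, so no extra hypothesis is needed. I would also remark that the statement is independent of the choice of adapted coordinates because $\mathcal C_{F}$ (resp. $\mathcal C_{G}$) and the vanishing of the displayed covector combination are coordinate-independent notions, as already observed before the proposition; but the cleanest route to the formula itself is the coordinate computation just described.
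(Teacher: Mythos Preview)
Your proof is correct and follows essentially the same approach as the paper: differentiate the implicit constraint identity $F^{v}(x,x,C,y^{\bar w})=0$ (respectively $F_{v}=0$) with respect to $y^{\bar u}$ to obtain the chain-rule relation, then contract with the multipliers $\lambda_{v}$ (respectively $\lambda^{v}$) to conclude. The only cosmetic difference is that the paper treats the covariant case explicitly and declares the contravariant case analogous, whereas you do the reverse; your added remarks on smoothness and coordinate-independence are sound but not required for the argument.
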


\begin{proof} We consider the cov-constraint case, since the con-constraint
case is analogous. Differentiating the implicit equation $F_{u}(x^{u},x^{%
\bar{u}},C^{u},y^{\bar{u}})=0$ with respect to $y^{\bar{u}}$, we obtain 
\begin{equation*}
\sum\limits_{v}\dfrac{\partial C^{v}}{\partial y^{\bar{u}}}\dfrac{\partial
F_{u}}{\partial y^{v}}(x^{u},x^{\bar{u}},C^{u},y^{\bar{u}})+\dfrac{\partial
F_{u}}{\partial y^{\bar{u}}}(x^{u},x^{\bar{u}},C^{u},y^{\bar{u}})=0,
\end{equation*}%
thus using the hypothesis, the conclusion follows. 
\end{proof}

Nonlinear constraints  lift to linear constraints of the natural lifted
foliation $\mathcal{F}_{N\mathcal{F}}$ on $N\mathcal{F}$, as follows. On an
intersection of two adapted charts, the rule is%
\begin{equation}
C^{u^{\prime }}(x^{v^{\prime }},x^{\bar{v}^{\prime }},y^{\bar{v}^{\prime }})=%
\frac{\partial x^{u^{\prime }}}{\partial x^{u}}C^{u}(x^{v},x^{\bar{v}},y^{%
\bar{v}})+\frac{\partial x^{u^{\prime }}}{\partial x^{\bar{u}}}y^{\bar{u}}.
\label{eqdefC2}
\end{equation}%
Using this formula, by a direct computation, one can check that the formulas 
$C_{\bar{u}}^{u}=\frac{\partial C^{u}}{\partial y^{\bar{u}}}$, $%
C_{v}^{u}=0$  give rise to a linear constraint on $\mathcal{F}_{N%
\mathcal{F}}$, i.e. a splitting (left $\mathcal{C}$ or right $\mathcal{D}$)
of the exact sequence 
\begin{equation}
0\rightarrow T\mathcal{F}_{N\mathcal{F}}\overset{I_{0}^{\prime }}{%
\rightarrow }{}T(N\mathcal{F)}\overset{\Pi _{0}^{\prime }}{\rightarrow }{}N%
\mathcal{F}_{N\mathcal{F}}\rightarrow 0.  \label{exseq02}
\end{equation}

In the case when the foliation $\mathcal{F}$ is simple, that is, the leaves are
the fibers of a fibered manifold $\pi :M\rightarrow M^{\prime }$, a
con-constraint is defined by a fibered manifold map $F:TM\rightarrow VM=\ker
\pi _{\ast }$, while a cov-constraint is defined by a fibered manifold map $%
G:TM\rightarrow V^{\ast }M$.

If $C:TM\rightarrow T\mathcal{F}$ is a linear constraint, it is a nonlinear
one as well. Indeed, the right splitting $D:N\mathcal{F}\rightarrow TM$
induces by $\pi _{N\mathcal{F}}^{\ast }D:\pi _{N\mathcal{F}}^{\ast }N%
\mathcal{F}\rightarrow \pi _{N\mathcal{F}}^{\ast }TM$, the vector field $%
C^{\prime }=\pi _{N\mathcal{F}}^{\ast }D(\Gamma _{0})$ that is a nonlinear
constraint.

An affine constraint gives rise also to a nonlinear one, in a similar way.
Indeed, an affine constraint is given by a linear constraint $C$ and a
vector field $b\in \Gamma \left( T\mathcal{F}\right) $. The vector field $%
C^{\prime }=\pi _{N\mathcal{F}}^{\ast }D(\Gamma _{0})+\pi _{N\mathcal{F}%
}^{\ast }b$ gives a nonlinear constraint, where $D$ is the right splitting
of (\ref{exseq}) corresponding to $C$.

According to \cite{Bl, BKMM} (see also the next section) some linear or
affine constraints, give rise to a curvature, which is a tensor. In order to
study the case of nonlinear constraints, we have to consider instead a
semi-spray, which is  not a tensor anymore.

An \emph{almost transverse semi-spray} is a (non-necessarily foliated)
section $S:N\mathcal{F}\rightarrow NN\mathcal{F}$ that is a section for both
vector bundle structures on $NN\mathcal{F}$ over $N\mathcal{F}$ (one of
usual vector bundle, the other one induced by the transversal component of
the differential of the canonical projection $N\mathcal{F}\rightarrow M$, as
a foliated map). In the case of the trivial foliation by the points of $M$,
we recover the definition of a semi-spray on $M$.

In the case of a simple foliation given by $\pi :M\rightarrow M^{\prime }$,
an almost transverse semi-spray is a section $S:NM\rightarrow N(NM)=NNM$
that is a section for both vector bundle structures on $NNM$ over $NM$ (one
of usual vector bundle, the other one induced by the projection of the
differential of the canonical projection $NM\rightarrow M$, as a fibered
manifold map over $M^{\prime }$).

Using local coordinates, an almost transverse semi-spray $S$ has the local
form%
\begin{equation}
\left( x^{u},x^{\bar{u}},y^{\bar{u}}\right) \overset{S}{\rightarrow }%
{}\left( x^{u},x^{\bar{u}},y^{\bar{u}},y^{\bar{u}},S^{\bar{u}}(x^{u},x^{\bar{%
u}},y^{\bar{u}})\right) .  \label{locfS}
\end{equation}%
 We say that $S$ is a transverse
semi-spray if $S$ happens to be a foliate section. This condition means that in formula (\ref{locfS}) one have $S^{%
\bar{u}}=S^{\bar{u}}(x^{\bar{u}},y^{\bar{u}})$.~

In order to lift an (almost) transverse semi-spray one needs a nonlinear
constraint (in particular it can be a linear or an affine one).

\begin{proposition}
\label{prextsemisptr}If $S\in \Gamma (N\widetilde{N\mathcal{F}})$ is an
almost transverse semi-spray and $C:N\mathcal{F}\rightarrow TM$ is a
nonlinear constraint, then there is a unique vector field $\mathcal{S}\in 
\mathcal{X}(\widetilde{N\mathcal{F}})$ that projects by $T\widetilde{N%
\mathcal{F}}\rightarrow N\widetilde{N\mathcal{F}}$ and $T\widetilde{N%
\mathcal{F}}\rightarrow TM$ to $S$ and $C$ respectively.
\end{proposition}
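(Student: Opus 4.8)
The plan is to construct $\mathcal{S}$ in adapted local coordinates, observe that the two projection conditions force its expression, and then check that the local expressions glue. Fix an adapted chart with coordinates $(x^{u},x^{\bar u})$ on $M$ and induced coordinates $(x^{u},x^{\bar u},y^{\bar u})$ on $N\mathcal{F}$, so a vector field on $\widetilde{N\mathcal{F}}$ reads $\mathcal{S}=A^{u}\,\partial/\partial x^{u}+A^{\bar u}\,\partial/\partial x^{\bar u}+B^{\bar u}\,\partial/\partial y^{\bar u}$. The map $T\widetilde{N\mathcal{F}}\to TM$ is the differential of the canonical projection $\pi_{N\mathcal{F}}:N\mathcal{F}\to M$, which kills $\partial/\partial y^{\bar u}$ and carries $\partial/\partial x^{u},\partial/\partial x^{\bar u}$ to the corresponding coordinate fields on $M$; the map $T\widetilde{N\mathcal{F}}\to N\widetilde{N\mathcal{F}}$ forgets the leaf part $A^{u}\,\partial/\partial x^{u}$ and records the transverse second-order part, so that in the coordinates of $(\ref{locfS})$ it sends $\mathcal{S}$ to $(x^{u},x^{\bar u},y^{\bar u},A^{\bar u},B^{\bar u})$. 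Comparing with the local form $(\ref{eqdefC})$ of $C$, the condition that $\mathcal{S}$ project onto $C$ forces $A^{u}=C^{u}$ and $A^{\bar u}=y^{\bar u}$; comparing with $(\ref{locfS})$, the condition that $\mathcal{S}$ project onto $S$ forces $A^{\bar u}=y^{\bar u}$ (again) and $B^{\bar u}=S^{\bar u}$. Hence on every chart $\mathcal{S}$ must be
\[
\mathcal{S}=C^{u}\,\frac{\partial}{\partial x^{u}}+y^{\bar u}\,\frac{\partial}{\partial x^{\bar u}}+S^{\bar u}\,\frac{\partial}{\partial y^{\bar u}},
\]
which already gives uniqueness.

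The apparently over-determined system is consistent: the coefficient $A^{\bar u}$ is pinned down by both projections, but in both cases the value is $y^{\bar u}$, precisely because the $\partial/\partial x^{\bar u}$-component of $C$ equals $y^{\bar u}$ --- this is the defining relation $\tilde J(C)=\Gamma_{0}$ written out via $(\ref{projpi})$ --- and because the matching component of $S$ equals $y^{\bar u}$ as well, which is the meaning of the repeated $y^{\bar u}$ in $(\ref{locfS})$, i.e. the requirement that $S$ be a section also for the second vector bundle structure on $NN\mathcal{F}$. Invariantly: at each $\zeta\in\widetilde{N\mathcal{F}}$ the two maps $T_{\zeta}\widetilde{N\mathcal{F}}\to T_{\pi_{N\mathcal{F}}(\zeta)}M$ and $T_{\zeta}\widetilde{N\mathcal{F}}\to(N\widetilde{N\mathcal{F}})_{\zeta}$ are jointly injective, with image the pairs of vectors whose induced transverse vectors in $N_{\pi_{N\mathcal{F}}(\zeta)}\mathcal{F}$ agree; the pair $(C(\zeta),S(\zeta))$ lies in this image by the two facts just recalled, so it has a unique, smoothly varying preimage $\mathcal{S}(\zeta)$.

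It remains to glue the local fields, and this is the only step where real care is needed. On an overlap of adapted charts the coefficients $C^{u}$ transform by the rule $(\ref{eqdefC2})$ and $S^{\bar u}$ by the classical (quadratic) semi-spray rule; together with $A^{\bar u}=y^{\bar u}$ these are exactly the transformation laws of the three components of a vector field on $N\mathcal{F}$, so the displayed local expressions agree on overlaps and define a global $\mathcal{S}\in\mathcal{X}(\widetilde{N\mathcal{F}})$ (this is also immediate from the intrinsic description above). The essential point is thus that $(\ref{eqdefC2})$ together with $\tilde J(C)=\Gamma_{0}$ promote $C^{u}\,\partial/\partial x^{u}+y^{\bar u}\,\partial/\partial x^{\bar u}$ from a mere section of $\pi_{N\mathcal{F}}^{\ast}TM$ to the $\pi_{N\mathcal{F}}$-related part of a genuine vector field on $N\mathcal{F}$, whereupon $S$ being an almost transverse semi-spray supplies its $\partial/\partial y^{\bar u}$-component. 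In the simple--foliation case ($\mathcal{F}$ given by $\pi:M\to M'$) the argument is verbatim the same, with $N\mathcal{F}$ and $\mathcal{F}_{N\mathcal{F}}$ replaced by $NM$ and the foliation induced on it.
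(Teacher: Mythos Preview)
Your proof is correct and follows essentially the same route as the paper: write $\mathcal{S}$ in adapted coordinates, use the two projection conditions together with $(\ref{eqdefC})$ and $(\ref{locfS})$ to pin down the local form $\mathcal{S}=C^{u}\,\partial/\partial x^{u}+y^{\bar u}\,\partial/\partial x^{\bar u}+S^{\bar u}\,\partial/\partial y^{\bar u}$, and then verify globality via the chain rules on overlaps. You are in fact more explicit than the paper about why the apparently over-determined coefficient $A^{\bar u}$ is consistently $y^{\bar u}$ (invoking $\tilde J(C)=\Gamma_{0}$ and the double-section property of $S$), and you add a clean invariant reformulation via the fibered product of the two projections; the paper simply asserts the local form and leaves the gluing as ``a straightforward verification of chain rules''.
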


\begin{proof} We use local coordinates $\left( x^{u},x^{\bar{u}},y^{\bar{u}%
}\right) $ on an open set $V=\pi _{N\mathcal{F}}^{-1}U\subset N\mathcal{F}$,
corresponding to some coordinates $\left( x^{u},x^{\bar{u}}\right) $ on $%
U\subset M$.  Consider $S$ and $C$ having the local forms (\ref{locfS}) and (%
\ref{eqdefC}) respectively. Taking into account the conditions, then $%
\mathcal{S}$ has the local form%
\begin{eqnarray}
\mathcal{S} &=&C^{u}\dfrac{\partial }{\partial x^{u}}+y^{\bar{u}}\dfrac{%
\partial }{\partial x^{\bar{u}}}+S^{\bar{u}}\dfrac{\partial }{\partial y^{%
\bar{u}}}  \label{eqlocsemisp} \\
&=&C_{V}+S^{\bar{u}}\dfrac{\partial }{\partial y^{\bar{u}}}.  \notag
\end{eqnarray}%
By a straightforward verification of chain rules on the intersection
domains, one can check that $\mathcal{S}$ is a global vector field. 
\end{proof}

Notice that considering coordinates $\left( x^{u},x^{\bar{u}},y^{\bar{u}%
}\right) $ and $\left( x^{u^{\prime }},x^{\bar{u}^{\prime }},y^{\bar{u}%
^{\prime }}\right) $ on $V=\pi _{N\mathcal{F}}^{-1}U$ and $V^{\prime }=\pi
_{N\mathcal{F}}^{-1}U^{\prime }$ respectively, then 
\begin{eqnarray}
C_{V} &=&C_{V^{\prime }}+y^{\bar{v}}\dfrac{\partial x^{\bar{u}^{\prime }}}{%
\partial x^{\bar{v}}}\dfrac{\partial }{\partial y^{\bar{u}^{\prime }}},
\label{formC} \\
S^{\bar{u}^{\prime }}\left( x^{v^{\prime }},x^{\bar{v}^{\prime }},y^{\bar{v}%
^{\prime }}\right) &=&S^{\bar{u}}\left( x^{v},x^{\bar{v}},y^{\bar{v}}\right) 
\dfrac{\partial x^{\bar{u}^{\prime }}}{\partial x^{\bar{u}}}+y^{\bar{v}}%
\dfrac{\partial y^{\bar{u}^{\prime }}}{\partial x^{\bar{v}}}.  \label{formS}
\end{eqnarray}

A vector field $\mathcal{S}\in \mathcal{X}(N\mathcal{F})$ given by
Proposition \ref{prextsemisptr} will be called a $C$\emph{-semispray}.

Let us notice that $C_{V}$ and $C$ have the same formulas, but they are
different as vector fields; $C:N\mathcal{F}\rightarrow TM$, but $C_{V}\in 
\mathcal{X}(V)=\mathcal{X}(N\mathcal{F}_{U})$ is a local vector field.

A  historically representative nonlinear example is Appell's nonlinear
constraint,  defined as follows. Take the foliation of $I\!\!R_{0}^{3}=I\!\!R^{3}%
\backslash \{\bar{0}\}$ generated by $\dfrac{\partial }{\partial z}$. Denote 
$x=x^{\bar{1}}$, $y=x^{\bar{2}}$ and $z=x^{1}$ and consider the nonlinear
constraint given by the implicit equation 
\begin{equation}
\alpha ^{2}\left( \left( y^{\bar{1}}\right) ^{2}+\left( y^{\bar{2}}\right)
^{2}\right) -\left( y^{1}\right) ^{2}=0,\alpha \neq 0.  \label{con-nelApp}
\end{equation}

We have $y^{1}=C^{1}\left( y^{\bar{1}},y^{\bar{2}}\right) =\pm \alpha \sqrt{\left(
y^{\bar{1}}\right) ^{2}+\left( y^{\bar{2}}\right) ^{2}}$, but we take $%
C^{1}\left( y^{\bar{1}},y^{\bar{2}}\right) =\alpha \sqrt{\left( y^{\bar{1}%
}\right) ^{2}+\left( y^{\bar{2}}\right) ^{2}}$.

Formula (\ref{formDsec}) gives 
\begin{equation*}
X^{1}\dfrac{\partial }{\partial x^{1}}+X^{\bar{1}}\dfrac{\partial }{\partial
x^{\bar{1}}}+X^{\bar{2}}\dfrac{\partial }{\partial x^{\bar{2}}}\overset{%
C^{\prime }}{\rightarrow }{}\left( X^{1}+\alpha \dfrac{X^{\bar{1}}y^{\bar{1}%
}+X^{\bar{2}}y^{\bar{2}}}{\sqrt{\left( y^{\bar{1}}\right) ^{2}+\left( y^{%
\bar{2}}\right) ^{2}}}\right) \dfrac{\partial }{\partial x^{1}}.
\end{equation*}

We can consider time dependent constraints, as follows. Let us consider $%
N^{T}\mathcal{F}=N\mathcal{F}\times I\!\!R$ or $N^{T}\mathcal{F}=N\mathcal{F}%
\times S^{1}$ and the foliation $\overline{\mathcal{F}}^{T}$ on $N^{T}%
\mathcal{F}$ is induced by the foliation $\overline{\mathcal{F}}=\mathcal{F}%
_{N\mathcal{F}}$ on $N\mathcal{F}$, such that the canonical projection $N^{T}%
\mathcal{F}\rightarrow N\mathcal{F}$ is a diffeomorphism of leaves, thus the
new parameter is transverse.

A \emph{time dependent nonlinear constraint} on $M$ is a map $C:N^{T}%
\mathcal{F}\rightarrow TM$, viewed also as a section $C\in \Gamma (\pi
_{N^{T}\mathcal{F}}^{\ast }TM)$, such that $\tilde{J}\left( C\right) =\Gamma
_{0}$. Using local coordinates,%
\begin{equation}
(x^{u},x^{\bar{u}},y^{\bar{u}},t)\overset{C}{\rightarrow }{}(C^{u}(x^{v},x^{%
\bar{v}},y^{\bar{v}},t),y^{\bar{u}}),C=C^{u}\dfrac{\partial }{\partial x^{u}}%
+y^{\bar{u}}\dfrac{\partial }{\partial x^{\bar{u}}}.  \label{eqdefCa}
\end{equation}

There is an exact sequence, induced by (\ref{exseq011}): 
\begin{equation}
0\rightarrow \pi _{N^{T}\mathcal{F}}^{\ast }T\mathcal{F}\overset{%
I_{0}^{\prime \prime }}{\rightarrow }{}\pi _{N^{T}\mathcal{F}}^{\ast }TM%
\overset{\Pi _{0}^{\prime \prime }}{\rightarrow }{}\pi _{N^{T}\mathcal{F}%
}^{\ast }N\mathcal{F}\rightarrow 0.  \label{exseq011a}
\end{equation}%
As in the time independent case, a time dependent nonlinear constraint on $M$
gives also rise to a left splitting $C^{\prime \prime }$ or, equivalently, a
right splitting $D^{\prime \prime }$ of the exact sequence (\ref{exseq011a}%
). Analogous formulas to (\ref{formDsec}) and (%
\ref{nonldecomp}) can be obtained in local coordinates.

A more general approach of time dependent constraints can be considered
 by taking $M^{\prime }=M\times I\!\!R$ instead of $M$ and the parameter from $%
I\!\!R$ being transverse. Then transverse coordinates get $x^{\bar{u}}$,
where $\bar{u}=\overline{0,n}$ and $x^{\bar{0}}=t\in I\!\!R$. The case
considered above is when $y^{\bar{0}}=1$, corresponding to $(t,1)\equiv 
\dfrac{\partial }{\partial t},$ the tangent vector to curve $t\rightarrow t$
in $I\!\!R$. We do not use this general situation in the paper.

In the case when the foliation $\mathcal{F}$ is simple, given by a fibered
manifold $\pi :M\rightarrow M^{\prime }$, then $N^{T}\mathcal{F}=VM\times
I\!\!R$ or $N^{T}\mathcal{F}=VM\times S^{1}$ and a time dependent nonlinear
constraint on $M$ is a map $C:VM\times I\!\!R\rightarrow TM$, viewed also as
a section $C\in \Gamma (\pi _{VM\times I\!\!R}^{\ast }TM)$, such that $%
\tilde{J}\left( C\right) =\Gamma _{0}$.

A classical example of time dependent nonlinear constraint is the
Appell-Hammel dynamic system in an elevator considered in \cite{LiBe},
having the time dependent constraints 
\begin{equation}
\alpha ^{2}\left( \left( y^{\bar{1}}\right) ^{2}+\left( y^{\bar{2}}\right)
^{2}\right) -\left( y^{1}-v^{0}(t)\right) ^{2}=0.  \label{ex-AH}
\end{equation}%
It is easy to see that the above Appell example corresponds to the
particular case when $v^{0}(t)=0$.

We have $y^{1}=C^{1}\left( y^{\bar{1}},y^{\bar{2}}\right) =v^{0}(t)\pm
\alpha \sqrt{\left( y^{\bar{1}}\right) ^{2}+\left( y^{\bar{2}}\right) ^{2}}$%
; we take $C^{1}\left( y^{\bar{1}},y^{\bar{2}}\right) =v^{0}(t)+\alpha \sqrt{%
\left( y^{\bar{1}}\right) ^{2}+\left( y^{\bar{2}}\right) ^{2}}$.

Formula (\ref{formDsec}) gives 
\begin{equation*}
X^{1}\dfrac{\partial }{\partial x^{1}}+X^{\bar{1}}\dfrac{\partial }{\partial
x^{\bar{1}}}+X^{\bar{2}}\dfrac{\partial }{\partial x^{\bar{2}}}\overset{%
C^{\prime }}{\rightarrow }{}\left( X^{1}+\alpha \dfrac{X^{\bar{1}}y^{\bar{1}%
}+X^{\bar{2}}y^{\bar{2}}}{\sqrt{\left( y^{\bar{1}}\right) ^{2}+\left( y^{%
\bar{2}}\right) ^{2}}}\right) \dfrac{\partial }{\partial x^{1}}.
\end{equation*}

Notice that formula (\ref{eqdefC2}) on $T(N\mathcal{F})$ shows that 
\begin{equation*}
\dfrac{\partial x^{\bar{u}^{\prime }}}{\partial x^{\bar{u}}}\dfrac{\partial
x^{\bar{v}^{\prime }}}{\partial x^{\bar{v}}}\dfrac{\partial ^{2}C^{u^{\prime
}}}{\partial y^{\bar{u}^{\prime }}\partial y^{\bar{v}^{\prime }}}=\dfrac{%
\partial x^{u^{\prime }}}{\partial x^{u}}\dfrac{\partial ^{2}C^{u}}{\partial
y^{\bar{u}}\partial y^{\bar{v}}},
\end{equation*}
thus 
\begin{equation*}
\mathcal{C}=\dfrac{\partial ^{2}C^{u}}{\partial y^{\bar{u}}\partial y^{\bar{v%
}}}dx^{\bar{u}}\otimes dx^{\bar{v}}\otimes \dfrac{\partial }{\partial x^{u}}
\end{equation*}%
defines a tensor $\mathcal{C}\in L(VN\mathcal{\mathcal{F}}\otimes VN\mathcal{%
\mathcal{F}},TN\mathcal{F)}$. This tensor vanishes only for linear or affine
constraints. Thus a non-vanishing $\mathcal{C}$ gives a non-linear
constraint.

In both nonlinear Appell's examples, the matrix of $\mathcal{C}$ is 
\begin{equation}
\alpha\left( \left( y^{\bar{1}}\right) ^{2}+\left( y^{\bar{2}}\right) ^{2}\right)
^{-\dfrac{3}{2}}\left( 
\begin{array}{cc}
\left( y^{\bar{2}}\right) ^{2} & -y^{\bar{1}}y^{\bar{2}} \\ 
-y^{\bar{1}}y^{\bar{2}} & \left( y^{\bar{1}}\right) ^{2}%
\end{array}%
\right) .  \label{eqhuv}
\end{equation}

\begin{proposition}
\label{prdefR}If $C:N\mathcal{F}\rightarrow TM$ is a nonlinear constraint
and $S\in \mathcal{X}(N\mathcal{F})$ is a $C$--semispray, then the local
formula 
\begin{equation}
-\left[ \left[ \dfrac{\partial }{\partial y^{\bar{u}}},C_{V}\right] ,C_{V}%
\right] +S^{\bar{v}}\dfrac{\partial ^{2}C^{u}}{\partial y^{\bar{u}}\partial
y^{\bar{v}}}\dfrac{\partial }{\partial x^{u}}=R_{\bar{u}}^{u}\dfrac{\partial 
}{\partial x^{u}},  \label{defR}
\end{equation}%
gives a global tensor $R\in L(VN\mathcal{\mathcal{F}},T\mathcal{F}_{N%
\mathcal{F}}\mathcal{)}$, $R=R_{\bar{u}}^{u}\left( x^{v},x^{\bar{v}},y^{\bar{%
v}}\right) \dfrac{\partial }{\partial x^{u}}\otimes dx^{\bar{u}}$.
\end{proposition}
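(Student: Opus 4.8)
The statement asserts that the local expression \eqref{defR} is (i) well-defined as a section of $T\mathcal{F}_{N\mathcal{F}}$, i.e.\ lands in the directions $\partial/\partial x^u$, and (ii) transforms tensorially under a change of adapted chart, so that it glues to a global element of $L(VN\mathcal{F},T\mathcal{F}_{N\mathcal{F}})$. The plan is to verify both points by a direct computation in coordinates, using the transition rules \eqref{eqdefC2}, \eqref{formC}, \eqref{formS} already recorded in the excerpt, together with the fact (just proved) that $\dfrac{\partial^2 C^u}{\partial y^{\bar u}\partial y^{\bar v}}$ are the components of the tensor $\mathcal{C}\in L(VN\mathcal{F}\otimes VN\mathcal{F},TN\mathcal{F})$.

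First I would establish that the left-hand side of \eqref{defR} has no $\partial/\partial x^{\bar u}$ or $\partial/\partial y^{\bar u}$ component. Starting from $C_V=C^u\frac{\partial}{\partial x^u}+y^{\bar u}\frac{\partial}{\partial x^{\bar u}}$, one computes $\bigl[\frac{\partial}{\partial y^{\bar u}},C_V\bigr]=\frac{\partial C^v}{\partial y^{\bar u}}\frac{\partial}{\partial x^v}+\frac{\partial}{\partial x^{\bar u}}$; bracketing once more against $C_V$ kills the constant $\frac{\partial}{\partial x^{\bar u}}$ term and, after cancellation, leaves only $\partial/\partial x^u$ components (the $y^{\bar v}\frac{\partial}{\partial x^{\bar v}}$ part of $C_V$ contributes $-\,\frac{\partial^2 C^v}{\partial y^{\bar u}\partial y^{\bar w}}\,\delta$-type terms that are second-order in $\partial_{y}$, precisely cancelled by the added term $S^{\bar v}\frac{\partial^2 C^u}{\partial y^{\bar u}\partial y^{\bar v}}\frac{\partial}{\partial x^u}$ when one keeps track of the $S^{\bar v}$ coming from the $\frac{\partial}{\partial y^{\bar v}}$-slot of $\mathcal{S}$). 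This is the bookkeeping step and the place where the specific combination in \eqref{defR} is forced.

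Next I would check tensoriality. The subtle part is that $C_V$ is only a local vector field: by \eqref{formC} it changes by the correction term $y^{\bar v}\frac{\partial x^{\bar u'}}{\partial x^{\bar v}}\frac{\partial}{\partial y^{\bar u'}}$, and $S^{\bar u}$ changes by the affine term $y^{\bar v}\frac{\partial y^{\bar u'}}{\partial x^{\bar v}}$ of \eqref{formS}. The idea is that the double bracket $-[[\partial_{y^{\bar u}},C_V],C_V]$ is \emph{not} by itself tensorial — the $C_V$-correction feeds back through the bracket and produces exactly a term proportional to $\frac{\partial^2 C}{\partial y\partial y}$ contracted with the Jacobian-derivative $\frac{\partial y^{\bar u'}}{\partial x^{\bar v}}$ — but this anomalous term is cancelled by the corresponding anomaly in $S^{\bar v}\frac{\partial^2 C^u}{\partial y^{\bar u}\partial y^{\bar v}}$, since $S^{\bar v}$ carries the matching affine correction and $\frac{\partial^2 C^u}{\partial y^{\bar u}\partial y^{\bar v}}$ is already tensorial. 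So the full right-hand side transforms as $R^{u'}_{\bar u'}=\frac{\partial x^{u'}}{\partial x^u}\frac{\partial x^{\bar u}}{\partial x^{\bar u'}}R^u_{\bar u}$, which is exactly the transformation law of a section of $L(VN\mathcal{F},T\mathcal{F}_{N\mathcal{F}})$, i.e.\ of $R^u_{\bar u}\,\frac{\partial}{\partial x^u}\otimes dx^{\bar u}$.

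The main obstacle I anticipate is purely organizational: correctly tracking the non-tensorial correction terms from \eqref{formC} and \eqref{formS} through the iterated Lie bracket, making sure every spurious term generated by differentiating the transition functions cancels. Once one adopts the viewpoint that $\mathcal{S}=C_V+S^{\bar v}\partial_{y^{\bar v}}$ is the \emph{global} $C$-semispray of Proposition \ref{prextsemisptr}, the cleanest route is to rewrite \eqref{defR} intrinsically — e.g.\ $R$ is (the $T\mathcal{F}$-component of) $-[[\,\cdot\,,\mathcal{S}],\mathcal{S}]$ acting on $VN\mathcal{F}$ corrected by $\mathcal{C}(\mathcal{S}^{V},-)$ — whence globality is automatic and only the claim that the result lies in $T\mathcal{F}_{N\mathcal{F}}$ needs the coordinate check from the second paragraph. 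I would present the coordinate computation but organize it around this intrinsic reformulation so the cancellations are transparent rather than miraculous.
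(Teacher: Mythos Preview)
Your overall strategy---a direct coordinate verification of tensoriality using the transition rules \eqref{formC} and \eqref{formS}---is exactly the paper's approach, and your part (ii) correctly identifies the mechanism: the double bracket $-[[\partial_{y^{\bar u}},C_V],C_V]$ fails to be tensorial by a term involving $\partial^2 C/\partial y\partial y$, and this anomaly is cancelled by the affine piece $y^{\bar v}\,\partial y^{\bar u'}/\partial x^{\bar v}$ in the transformation law of $S^{\bar u}$, contracted with the already tensorial $\mathcal C$.

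There is, however, a genuine confusion in your part (i). You claim that the bracket $-[[\partial_{y^{\bar u}},C_V],C_V]$ produces ``$\delta$-type terms that are second-order in $\partial_y$'' which must be cancelled by the $S^{\bar v}\,\partial^2 C^u/\partial y^{\bar u}\partial y^{\bar v}\,\partial_{x^u}$ term. This is false. Since $C_V=C^u\partial_{x^u}+y^{\bar u}\partial_{x^{\bar u}}$ has no $\partial_{y}$ component, neither the inner bracket $[\partial_{y^{\bar u}},C_V]=\tfrac{\partial C^v}{\partial y^{\bar u}}\partial_{x^v}+\partial_{x^{\bar u}}$ nor the outer one produces any $\partial_{y}$ term; a direct computation (the paper records it as formula \eqref{formpsc}) shows the double bracket already lies entirely in the span of $\partial_{x^u}$. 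The added term $S^{\bar v}\,\partial^2 C^u/\partial y^{\bar u}\partial y^{\bar v}\,\partial_{x^u}$ is itself in $\partial_{x^u}$ and could not cancel a $\partial_y$ component in any case. So the specific combination in \eqref{defR} is \emph{not} forced by (i); it is forced solely by the tensoriality requirement (ii). Your intrinsic reformulation via $\mathcal S$ is a reasonable heuristic, but since $\partial_{y^{\bar u}}$ is only a local frame and Lie brackets are not $C^\infty$-linear, it does not bypass the coordinate check---the paper simply does the computation.
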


\begin{proof} Let us consider two coordinates systems on $V$ and $V^{\prime
} $ , $V\cap V^{\prime }\neq \emptyset $, on $N\mathcal{F}$, as in the proof
of Proposition \ref{prextsemisptr}. One can check that

\begin{equation*}
\left[ \dfrac{\partial }{\partial y^{\bar{u}}},C_{V}\right] -\dfrac{\partial
x^{\bar{u}^{\prime }}}{\partial x^{\bar{u}}}\left[ \dfrac{\partial }{%
\partial y^{\bar{u}^{\prime }}},C_{V^{\prime }}\right] =\dfrac{\partial y^{%
\bar{u}^{\prime }}}{\partial x^{\bar{u}}}\dfrac{\partial }{\partial y^{\bar{u%
}^{\prime }}},
\end{equation*}%
and we have%
\begin{equation}
-\left[ C_{V},\left[ C_{V},\dfrac{\partial }{\partial y^{\bar{u}}}\right] %
\right] =\left( C^{v}\dfrac{\partial ^{2}C^{u}}{\partial x^{v}\partial y^{%
\bar{u}}}+y^{\bar{v}}\dfrac{\partial ^{2}C^{u}}{\partial x^{\bar{v}}\partial
y^{\bar{u}}}-\dfrac{\partial C^{u}}{\partial x^{\bar{u}}}-\dfrac{\partial
C^{v}}{\partial y^{\bar{u}}}\dfrac{\partial C^{u}}{\partial x^{v}}\right) 
\dfrac{\partial }{\partial x^{u}}.  \label{formpsc}
\end{equation}

By a long and straightforward computation, one  obtains the formula%
\begin{equation*}
-\left[ \left[ \dfrac{\partial }{\partial y^{\bar{u}}},C_{V}\right] ,C_{V}%
\right] +\dfrac{\partial x^{\bar{u}^{\prime }}}{\partial x^{\bar{u}}}\left[ %
\left[ \dfrac{\partial }{\partial y^{\bar{u}^{\prime }}},C_{V^{\prime
}}^{\prime }\right] ,C_{V^{\prime }}^{\prime }\right] =y^{\bar{v}}\dfrac{%
\partial y^{\bar{v}^{\prime }}}{\partial x^{\bar{v}}}\dfrac{\partial
^{2}C^{v^{\prime }}}{\partial y^{\bar{v}^{\prime }}\partial x^{\bar{u}%
^{\prime }}}\dfrac{\partial x^{\bar{u}^{\prime }}}{\partial x^{\bar{u}}}%
\dfrac{\partial }{\partial x^{v^{\prime }}}.
\end{equation*}%
Using also formula (\ref{formC}), we obtain the conclusion. 
\end{proof}

We call $R$ given by Proposition \ref{prdefR}  the $S$\emph{--curvature}
of $C$; the definition of $R$ does not depend on $S$ only in the case when $%
\mathcal{C}=0$, i.e. when $C$ is a linear or affine constraint, as in \cite%
{BKMM, Bl}, (see the formulas (\ref{eqcurvL}) and (\ref{eqcurvA}) below). In
general, the formula 
\begin{equation*}
\dfrac{\partial }{\partial y^{\bar{u}}}\overset{R_{V}}{\rightarrow }{}-\left[
C_{V},\left[ C_{V},\dfrac{\partial }{\partial y^{\bar{u}}}\right] \right] 
\end{equation*}%
gives only a local linear map $L(VN\mathcal{\mathcal{F}}_{U},T\mathcal{F}_{N%
\mathcal{F}_{U}}\mathcal{)}$ that does not  extend to $L(VN\mathcal{\mathcal{%
F}},T\mathcal{F}_{N\mathcal{F}}\mathcal{)}$. We say that $R_{V}$ is the 
\emph{pseudo-curvature} of $C$; it is a tensor in the case of linear and
affine constraints, but in the general case of nonlinear constraints, it is
not a tensor.

In the case of Appell's nonlinear constraint, one have $R_{V}=0$, only using
some euclidean coordinates.

In the case of a simple foliation $\mathcal{F}$ given by $\pi :M\rightarrow
M^{\prime }$, we have $VN\mathcal{\mathcal{F}}=V(NM)=VNM$ and $T\mathcal{F}%
_{N\mathcal{F}}=\pi _{NM}^{\ast }VNM$. Then Proposition \ref{prdefR} asserts
that if $C:NM\rightarrow TM$ is a nonlinear constraint and $\mathcal{S}\in \mathcal{X}%
(NM)$ is a $C$--semispray, then the local formula (\ref{defR}) gives a
global tensor $R\in L(VNM,\pi _{NM}^{\ast }VM)$. Notice that the conditions
that $\mathcal{S}\in \mathcal{X}(NM)$ be a $C$--semispray is given using Proposition %
\ref{prextsemisptr}: given an almost transverse semi-spray $S\in \Gamma (NNM)
$ and a nonlinear constraint $C:NM\rightarrow TM$ , then there is a unique
vector field $\mathcal{S}\in \mathcal{X}(NM)$ that projects by $TNM\rightarrow NNM$
and $TNM\rightarrow TM$ to $S$ and $C$ respectively.

\section{The Lagrangian dynamics for linear, affine and nonlinear constraint
systems}

The goal of this section is to express the main result of the paper, i.e. to
obtain the equations of motions ruled by a lagrangian and some nonlinear
constraints regularly related to a foliation, in a similar form given in 
\cite{BKMM, Bl} for linear and affine constraints on a fiber bundle. The
cases of linear and affine constraints using foliations recover the fibre
bundle situation, when the foliation is simple, i.e. the leaves on $M$ are
the fibers of a submersion $\pi:M\rightarrow M^{\prime }$.  Moreover, we give the
expressions of some similar equations of motions in the case of time
dependent constraints.

Let $L:TM\rightarrow I\!\!R$ be a lagrangian on the total space of a
foliated manifold endowed with a system of a nonlinear (possibly linear or
affine) constraint. We study the case of nonlinear constraints, thus we
consider one given by a left splitting $\mathcal{C}$ of $I_{0}^{\prime }$,
or by a right splitting $\mathcal{D}$ of the projection $\Pi _{0}^{\prime }$
in the exact sequence (\ref{exseq02}). As in the case of linear or affine
constraints in \cite[Sect. 5.2]{Bl}, we consider that the equations of
motions governed by a lagrangian and the constraint, can be deduced imposing
the principle to apply first the variation, then the projection of the
Lagrange equations according to the constraint, adapting in this way
d'Alambert's principle. Specifically, using the decomposition (\ref%
{nonldecomp}), then the constraints effect on Lagrange equations has, as for
linear and affine constraints in \cite[Sect. 5.2]{Bl}, the forms%
\begin{eqnarray*}
\left( \frac{d}{dt}\dfrac{\partial L}{\partial y^{u}}-\dfrac{\partial L}{%
\partial x^{u}}\right) \delta x^{u}+\left( \frac{d}{dt}\dfrac{\partial L}{%
\partial y^{\bar{u}}}-\dfrac{\partial L}{\partial x^{\bar{u}}}\right) \delta
x^{\bar{u}} &=&0, \\
\delta x^{u}+C_{\bar{u}}^{u}\delta x^{\bar{u}} &=&0,
\end{eqnarray*}%
where $C_{\bar{u}}^{u}=\dfrac{\partial C^{u}}{\partial y^{\bar{u}}}$. Notice
that $\delta $ is subject to $t=const.$ Substituting $\delta x^{\bar{u}}$ in
the Lagrange equations, one obtain the induced \emph{constrained Lagrange
equations}: 
\begin{equation}
\left( \frac{d}{dt}\dfrac{\partial L}{\partial y^{\bar{u}}}-\dfrac{\partial L%
}{\partial x^{\bar{u}}}\right) -C_{\bar{u}}^{u}\left( \frac{d}{dt}\dfrac{%
\partial L}{\partial y^{u}}-\dfrac{\partial L}{\partial x^{u}}\right) =0.
\label{eqlagc}
\end{equation}

As we see below for implicit nonlinear constraints, these equations are
concordant to Chetaev conditions.

A nonlinear constraint $C\in \Gamma (\pi _{N\mathcal{F}}^{\ast }TM)$ can be
viewed as a map $C:N\mathcal{F}\rightarrow TM$, thus any lagrangian $%
L:TM\rightarrow I\!\!R$ induces by composition $N\mathcal{F}\overset{C}{%
\rightarrow }{}TM\overset{L}{\rightarrow }{}I\!\!R$ a new lagrangian $%
L_{c}=L\circ C$ on $N\mathcal{F}$, called the \emph{constrained lagrangian}:%
\begin{equation}
L_{c}(x^{u},x^{\bar{u}},y^{\bar{u}})=L(x^{u},x^{\bar{u}},C^{u},y^{\bar{u}}).
\label{eqconst01}
\end{equation}

According to \cite[Sect. 5.2]{Bl}, in the cases when%
\begin{eqnarray}
C^{u}(x^{v},x^{\bar{v}},y^{\bar{v}}) &=&y^{\bar{u}}C_{\bar{u}}^{u}(x^{v},x^{%
\bar{v}}),  \label{eqC01} \\
C^{u}(x^{v},x^{\bar{v}},y^{\bar{v}}) &=&y^{\bar{u}}C_{\bar{u}}^{u}(x^{v},x^{%
\bar{v}})+b^{u}(x^{v},x^{\bar{v}})  \label{eqC02}
\end{eqnarray}%
i.e. of linear and affine constraints respectively, the constrained Lagrange
equations (\ref{eqlagc}) can be written in terms of the constrained
lagrangian as%
\begin{equation}
\left( \frac{d}{dt}\dfrac{\partial L_{c}}{\partial y^{\bar{u}}}-\dfrac{%
\partial L_{c}}{\partial x^{\bar{u}}}\right) =C_{\bar{u}}^{u}\frac{\partial
L_{c}}{\partial x^{u}}-\frac{\partial L}{\partial y^{u}}(B_{\bar{u}\bar{v}%
}^{u}y^{\bar{v}}+\gamma _{\bar{u}}^{u}),  \label{eqconst02}
\end{equation}%
where $B^{u}_{\bar{u}\bar{v}}$ is given in \eqref{eqcurvat} and 
\begin{eqnarray}
\gamma _{\bar{u}}^{u} &=&\dfrac{\partial b^{u}}{\partial x^{\bar{u}}}+C_{%
\bar{u}}^{v}\dfrac{\partial b^{u}}{\partial x^{v}}-b^{v}\dfrac{\partial C_{%
\bar{u}}^{u}}{\partial x^{v}}  \label{formga}
\end{eqnarray}%
are both tensors, say $B$ and $\gamma $ (see \cite{BKMM, Bl} for more
details).

In the linear constraint case (i.e. $b^{u}=0$), the formula (\ref{eqcurvat})
gives, according to formula (\ref{formpsc}), that the curvature $R$ of $C$
is 
\begin{equation}
R_{\bar{u}}^{u}=y^{\bar{v}}B_{\bar{u}\bar{v}}^{u}.  \label{eqcurvL}
\end{equation}

In the affine constraint case, the formulas (\ref{eqcurvat}), (\ref{formga})
and (\ref{formpsc}) give the curvature of $C$ by 
\begin{equation}
R_{\bar{u}}^{u}=y^{\bar{v}}B_{\bar{u}\bar{v}}^{u}+\gamma _{\bar{u}}^{u}.
\label{eqcurvA}
\end{equation}

In the sequel we extend formulas (\ref{eqconst02}) to the case of nonlinear
constraints.

The equation of motion of the extended nonholonomic system is%
\begin{equation*}
-\delta L=\left( \dfrac{d}{dt}\dfrac{\partial L}{\partial y^{\bar{u}}}-%
\dfrac{\partial L}{\partial x^{\bar{u}}}+C_{\bar{u}}^{u}\left( \dfrac{d}{dt}%
\dfrac{\partial L}{\partial y^{u}}-\dfrac{\partial L}{\partial x^{u}}\right)
\right) dx^{\bar{u}}=0.
\end{equation*}%
In the case of the induced lagrangian $L_{c}$, one  has%
\begin{equation*}
-\delta L_{c}=\left( \dfrac{d}{dt}\dfrac{\partial L_{c}}{\partial y^{\bar{u}}%
}-\dfrac{\partial L_{c}}{\partial x^{\bar{u}}}-C_{\bar{u}}^{u}\dfrac{%
\partial L_{c}}{\partial x^{u}}\right) dx^{\bar{u}}.
\end{equation*}

We have, for $C_{\bar{u}}^{u}=\frac{\partial C^{u}}{\partial y^{\bar{u}}}$,

\begin{eqnarray*}
\dfrac{d}{dt}\dfrac{\partial L_{c}}{\partial y^{\bar{u}}}&=&\dfrac{d}{dt}%
\left( \dfrac{\partial L}{\partial y^{\bar{u}}}+C_{\bar{u}}^{u}\dfrac{%
\partial L}{\partial y^{u}}\right) =\dfrac{d}{dt}\dfrac{\partial L}{\partial
y^{\bar{u}}}+\dfrac{\partial L}{\partial y^{u}}\dfrac{d}{dt}C_{\bar{u}%
}^{u}+C_{\bar{u}}^{u}\dfrac{d}{dt}\dfrac{\partial L}{\partial y^{u}}, \\
\dfrac{\partial L_{c}}{\partial x^{\bar{u}}}&=&\dfrac{\partial L}{\partial
x^{\bar{u}}}+\dfrac{\partial C^{v}}{\partial x^{\bar{u}}}\dfrac{\partial L}{%
\partial y^{v}}\,,\,\,\dfrac{\partial L_{c}}{\partial x^{u}}=\dfrac{\partial
L}{\partial x^{u}}+\dfrac{\partial C^{v}}{\partial x^{u}}\dfrac{\partial L}{%
\partial y^{v}},
\end{eqnarray*}
thus

\begin{eqnarray*}
-\delta L_{c}&=&\left(\dfrac{d}{dt}\dfrac{\partial L_{c}}{\partial y^{\bar{u}}}-%
\dfrac{\partial L_{c}}{\partial x^{\bar{u}}}-C_{\bar{u}}^{u}\dfrac{\partial
L_{c}}{\partial x^{u}}\right)d x^{\bar{u}} \\
&=&\left(\dfrac{d}{dt}\dfrac{\partial L}{\partial y^{\bar{u}}}-\dfrac{\partial L}{%
\partial x^{\bar{u}}}+C_{\bar{u}}^{u}\left( \dfrac{d}{dt}\dfrac{\partial L}{%
\partial y^{u}}-\dfrac{\partial L}{\partial x^{u}}\right) +\dfrac{\partial L%
}{\partial y^{u}}\left( \dfrac{d}{dt}C_{\bar{u}}^{u}-\dfrac{\partial C^{u}}{%
\partial x^{\bar{u}}}-C_{\bar{u}}^{v}\dfrac{\partial C^{u}}{\partial x^{v}}%
\right)\right)dx^{\bar{u}} \\
&=&-\delta L+\left(\dfrac{\partial L}{\partial y^{u}}\left( \dfrac{\partial C_{%
\bar{u}}^{u}}{\partial y^{\bar{v}}}\dfrac{dy^{\bar{v}}}{dt}+\dfrac{\partial
C_{\bar{u}}^{u}}{\partial x^{\bar{v}}}y^{\bar{v}}+\dfrac{\partial C_{\bar{u}%
}^{u}}{\partial x^{v}}C^{v}-\dfrac{\partial C^{u}}{\partial x^{\bar{u}}}-C_{%
\bar{u}}^{v}\dfrac{\partial C^{u}}{\partial x^{v}}\right)\right)dx^{\bar{u}} \\
&=&\dfrac{\partial L}{\partial y^{u}}\left( \dfrac{\partial ^{2}C^{u}}{%
\partial y^{\bar{u}}\partial y^{\bar{v}}}\dfrac{dy^{\bar{v}}}{dt}+\dfrac{%
\partial C_{\bar{u}}^{u}}{\partial x^{\bar{v}}}y^{\bar{v}}+\dfrac{\partial
C_{\bar{u}}^{u}}{\partial x^{v}}C^{v}-\dfrac{\partial C^{u}}{\partial x^{%
\bar{u}}}-C_{\bar{u}}^{v}\dfrac{\partial C^{u}}{\partial x^{v}}\right)dx^{\bar{u}} .
\end{eqnarray*}
Thus, using (\ref{formpsc}), one  has 
\begin{equation}
-\delta L_{c}=\dfrac{\partial L}{\partial y^{u}}\left( \dfrac{\partial
^{2}C^{u}}{\partial y^{\bar{u}}\partial y^{\bar{v}}}\dfrac{dy^{\bar{v}}}{dt}-%
\left[ C_{V},\left[ C_{V},\dfrac{\partial }{\partial y^{\bar{u}}}\right] %
\right] ^{u}\right) dx^{\bar{u}}.  \label{eqLc01}
\end{equation}

On the other hand,

\begin{eqnarray*}
\dfrac{d}{dt}\dfrac{\partial L_{c}}{\partial y^{\bar{u}}}-\dfrac{\partial
L_{c}}{\partial x^{\bar{u}}}-C_{\bar{u}}^{u}\dfrac{\partial L_{c}}{\partial
x^{u}}&=&\dfrac{\partial ^{2}L_{c}}{\partial y^{\bar{u}}\partial y^{\bar{v}}}%
\dfrac{dy^{\bar{v}}}{dt}+\dfrac{\partial ^{2}L_{c}}{\partial y^{\bar{u}%
}\partial x^{\bar{v}}}y^{\bar{v}}-\dfrac{\partial L_{c}}{\partial x^{\bar{u}}%
}-C_{\bar{u}}^{u}\dfrac{\partial L_{c}}{\partial x^{u}} \\
&=&\dfrac{\partial ^{2}L_{c}}{\partial y^{\bar{u}}\partial y^{\bar{v}}}%
\dfrac{dy^{\bar{v}}}{dt}+F_{\bar{u}},
\end{eqnarray*}
thus 
\begin{equation}
-\delta L_{c}=\left(\dfrac{\partial ^{2}L_{c}}{\partial y^{\bar{u}}\partial y^{%
\bar{v}}}\dfrac{dy^{\bar{v}}}{dt}+F_{\bar{u}}\right)dx^{\bar{u}},  \label{eqLc02}
\end{equation}%
where 
\begin{equation}
F_{\bar{u}}=\dfrac{\partial ^{2}L_{c}}{\partial y^{\bar{u}}\partial x^{\bar{v%
}}}y^{\bar{v}}-\dfrac{\partial L_{c}}{\partial x^{\bar{u}}}-C_{\bar{u}}^{u}%
\dfrac{\partial L_{c}}{\partial x^{u}}.  \label{deffub}
\end{equation}

Comparing the relations (\ref{eqLc01}) and (\ref{eqLc02}), we obtain 
\begin{equation}
\left( \dfrac{\partial L}{\partial y^{u}}\dfrac{\partial ^{2}C^{u}}{\partial
y^{\bar{u}}\partial y^{\bar{v}}}-\dfrac{\partial ^{2}L_{c}}{\partial y^{\bar{%
u}}\partial y^{\bar{v}}}\right) \dfrac{dy^{\bar{v}}}{dt}-F_{\bar{u}}-\dfrac{%
\partial L}{\partial y^{u}}\left[ C_{V},\left[ C_{V},\dfrac{\partial }{%
\partial y^{\bar{u}}}\right] \right] ^{u}=0.  \label{eqsemisph}
\end{equation}%
Denote 
\begin{equation}
h_{\bar{u}\bar{v}}=\dfrac{\partial L}{\partial y^{u}}\dfrac{\partial
^{2}C^{u}}{\partial y^{\bar{u}}\partial y^{\bar{v}}}-\dfrac{\partial
^{2}L_{c}}{\partial y^{\bar{u}}\partial y^{\bar{v}}}.  \label{dehnonl}
\end{equation}%
It is easy to see that $h=\left( h_{\bar{u}\bar{v}}\right) $ gives a global
bilinear form in the fibers of $VN\mathcal{F=}\pi _{N\mathcal{F}}^{\ast }N%
\mathcal{F}$.

We have, by a straightforward computation,%
\begin{equation*}
h_{\bar{u}\bar{v}}=\dfrac{\partial ^{2}L}{\partial y^{\bar{u}}\partial y^{%
\bar{v}}}+\dfrac{\partial C^{u}}{\partial y^{\bar{u}}}\dfrac{\partial ^{2}L}{%
\partial y^{\bar{v}}\partial y^{u}}+\dfrac{\partial C^{u}}{\partial y^{\bar{v%
}}}\dfrac{\partial ^{2}L}{\partial y^{\bar{u}}\partial y^{u}}+\dfrac{%
\partial C^{u}}{\partial y^{\bar{u}}}\dfrac{\partial C^{v}}{\partial y^{\bar{%
v}}}\dfrac{\partial ^{2}L}{\partial y^{u}\partial y^{v}}.
\end{equation*}

Using the splitting ($C^{\prime \prime }$ at left or $D^{\prime \prime }$at
right) of  the exact sequence (\ref{exseq011}) given by Proposition \ref%
{prconsdoi}, one can  easily deduce an interpretation of $h$. Recall that the
hessian of $L$ is a bilinear form in the fibers of $VTM=\pi _{TM}^{\ast }TM$.

\begin{proposition}
\label{prdefh}The bilinear form $h$ has the form $h=\left( D^{\prime \prime
}\right) ^{\ast }H_{L}$, where $H_{L}$ is the vertical hessian of $L$,
restricted to $N\mathcal{F}$, as the image of the constraint map $C:N%
\mathcal{F}\rightarrow TM$.
\end{proposition}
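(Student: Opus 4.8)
The plan is a direct computation in an adapted chart $(x^{u},x^{\bar u},y^{\bar u})$ of $N\mathcal{F}$, which is legitimate because --- as already recorded above --- $h=(h_{\bar u\bar v})$ is a globally defined bilinear form in the fibres of $VN\mathcal{F}=\pi_{N\mathcal{F}}^{\ast}N\mathcal{F}$, while $D^{\prime\prime}$, the constraint map $C$ and the vertical Hessian $H_{L}$ are intrinsic objects; hence it suffices to identify the two sides over a single chart.

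First I would write out the three ingredients in coordinates. For the left-hand side I would use the chain-rule expression for $h_{\bar u\bar v}$ displayed just before the proposition, i.e.\ the term $\partial^{2}L/\partial y^{\bar u}\partial y^{\bar v}$, the two mixed terms $(\partial C^{u}/\partial y^{\bar u})(\partial^{2}L/\partial y^{\bar v}\partial y^{u})$ and $(\partial C^{u}/\partial y^{\bar v})(\partial^{2}L/\partial y^{\bar u}\partial y^{u})$, and the quadratic term $(\partial C^{u}/\partial y^{\bar u})(\partial C^{v}/\partial y^{\bar v})(\partial^{2}L/\partial y^{u}\partial y^{v})$, all the $L$-derivatives being evaluated along $C$, i.e.\ at $(x^{u},x^{\bar u},C^{u},y^{\bar u})$. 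For $D^{\prime\prime}$ I would take the right splitting produced in the proof of Proposition \ref{prconsdoi}: from the form \eqref{formDsec} of the left splitting $C^{\prime\prime}$ one reads that $N^{\prime\prime}\mathcal{F}=\ker C^{\prime\prime}\subset\pi_{N\mathcal{F}}^{\ast}TM$ is locally framed by $D^{\prime\prime}(\overline{\partial/\partial x^{\bar u}})=\partial/\partial x^{\bar u}-(\partial C^{v}/\partial y^{\bar u})\,\partial/\partial x^{v}$ (the nonlinear analogue of the linear model \eqref{locflind}). Finally $H_{L}$ is the bilinear form on $\pi_{TM}^{\ast}TM$ with components $\partial^{2}L/\partial y^{a}\partial y^{b}$, $a,b$ running over all $m+n$ fibre indices; since $C$ is a fibered map over $M$, restricting $H_{L}$ to $N\mathcal{F}$ viewed as the image of $C$ means pulling it back along $C$, which produces a bilinear form on $\pi_{N\mathcal{F}}^{\ast}TM$ with components $(\partial^{2}L/\partial y^{a}\partial y^{b})(x^{u},x^{\bar u},C^{u},y^{\bar u})$, and then applying $(D^{\prime\prime})^{\ast}$ restricts this further to the sub-bundle $N^{\prime\prime}\mathcal{F}$.

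The computation then reduces to plugging $D^{\prime\prime}(\overline{\partial/\partial x^{\bar u}})$ and $D^{\prime\prime}(\overline{\partial/\partial x^{\bar v}})$ into this pulled-back Hessian and expanding bilinearly: the $\partial/\partial x^{\bar u}\otimes\partial/\partial x^{\bar v}$ contribution reproduces the first term, the two cross contributions reproduce the mixed terms, and the contribution in which both slots take their $\partial/\partial x^{v}$-part reproduces the quadratic term, so that the result equals $h_{\bar u\bar v}$ (with the overall sign fixed by the convention chosen for $D^{\prime\prime}$ in \eqref{locflind}--\eqref{formDsec}). Since both sides are intrinsic, this chart identity yields the invariant statement. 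The only place where care is really needed --- and where a slip is most likely --- is this bilinear bookkeeping: the index contractions, the signs inherited from the splitting, and keeping straight which derivatives of $L$ are taken before, and which after, the substitution $y^{u}=C^{u}$. There is no conceptual obstacle beyond that.
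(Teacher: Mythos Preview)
Your approach is exactly the paper's: a chart computation that reduces to the matrix identity
\[
(h_{\bar u\bar v})=\begin{pmatrix}\delta_{\bar u}^{\bar u_1} & C_{\bar u}^{u_1}\end{pmatrix}
\begin{pmatrix}
\dfrac{\partial^{2}L}{\partial y^{\bar u_1}\partial y^{\bar u_2}} & \dfrac{\partial^{2}L}{\partial y^{\bar u_1}\partial y^{u_2}}\\[1ex]
\dfrac{\partial^{2}L}{\partial y^{u_1}\partial y^{\bar u_2}} & \dfrac{\partial^{2}L}{\partial y^{u_1}\partial y^{u_2}}
\end{pmatrix}
\begin{pmatrix}\delta_{\bar v}^{\bar u_2}\\ C_{\bar v}^{u_2}\end{pmatrix},
\]
with all $L$--derivatives evaluated at $(x^{u},x^{\bar u},C^{u},y^{\bar u})$.

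There is, however, one concrete slip. You take $D^{\prime\prime}(\overline{\partial/\partial x^{\bar u}})=\partial/\partial x^{\bar u}-C_{\bar u}^{v}\,\partial/\partial x^{v}$ and then appeal to an ``overall sign''. But with that minus sign the bilinear expansion gives
\[
H_{\bar u\bar v}\;-\;C_{\bar u}^{w}H_{w\bar v}\;-\;C_{\bar v}^{z}H_{\bar u z}\;+\;C_{\bar u}^{w}C_{\bar v}^{z}H_{wz},
\]
whereas the displayed formula for $h_{\bar u\bar v}$ just above the proposition has \emph{plus} signs on the two cross terms. The discrepancy is in the mixed terms only, so it cannot be absorbed into a global sign convention. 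The right splitting that makes the statement true is the \emph{fibrewise differential of the constraint map $C$ itself}, namely
\[
D^{\prime\prime}\!\left(\overline{\partial/\partial x^{\bar u}}\right)=\partial/\partial x^{\bar u}+C_{\bar u}^{v}\,\partial/\partial x^{v},
\]
which is precisely the vector encoded by the row $(\delta_{\bar u}^{\bar u_1},\,C_{\bar u}^{u_1})$ in the paper's matrix. With this $D^{\prime\prime}$ your bilinear bookkeeping goes through verbatim and reproduces $h_{\bar u\bar v}$ on the nose. (That this $D^{\prime\prime}$ differs in sign from the kernel of the map in \eqref{formDsec} is a notational inconsistency of the paper, not a defect of your argument.)
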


\begin{proof} We use adapted coordinates. The conclusion follows using the
form (\ref{formDsec}) and the identity%
\begin{equation*}
\left( h_{\bar{u}\bar{v}}\right) =\left( 
\begin{array}{cc}
\delta _{\bar{u}}^{\bar{u}_{1}} & C_{\bar{u}}^{u_{1}}%
\end{array}%
\right) \left( 
\begin{array}{cc}
\dfrac{\partial ^{2}L}{\partial y^{\bar{u}_{1}}\partial y^{\bar{u}_{2}}} & 
\dfrac{\partial ^{2}L}{\partial y^{\bar{u}_{1}}\partial y^{u_{2}}} \\ 
\dfrac{\partial ^{2}L}{\partial y^{u_{1}}\partial y^{\bar{u}_{2}}} & \dfrac{%
\partial ^{2}L}{\partial y^{u_{1}}\partial y^{u_{2}}}%
\end{array}%
\right) \left( 
\begin{array}{c}
\delta _{\bar{v}}^{\bar{u}_{2}} \\ 
C_{\bar{v}}^{u_{2}}%
\end{array}%
\right) .
\end{equation*}%
\end{proof}

We say that the lagrangian $L$ is $C$\emph{--regular} if the bilinear form $h
$ is nondegenerated on the fibers of $VN\mathcal{F}$. If  that is  the case,
denoting 
\begin{equation*}
\left( h^{\bar{u}\bar{v}}\right) =\left( h_{\bar{u}\bar{v}}\right) ^{-1},
\end{equation*}%
the equation (\ref{eqsemisph}) gives 
\begin{eqnarray}
\dfrac{dy^{\bar{u}}}{dt} &=&S^{\bar{u}}\overset{def.}{=}{}h^{\bar{u}\bar{v}%
}\left( F_{\bar{v}}+\dfrac{\partial L}{\partial y^{u}}\left[ C_{V},\left[ C_{V},%
\dfrac{\partial }{\partial y^{\bar{v}}}\right] \right] ^{u}\right)   \notag
\\
&=&h^{\bar{u}\bar{v}}\left( \dfrac{\partial ^{2}L_{c}}{\partial y^{\bar{v}%
}\partial x^{\bar{w}}}y^{\bar{w}}-\dfrac{\partial L_{c}}{\partial x^{\bar{v}}%
}-C_{\bar{v}}^{u}\dfrac{\partial L_{c}}{\partial x^{u}}+\dfrac{\partial L}{%
\partial y^{u}}\left[ C_{V},\left[ C_{V},\dfrac{\partial }{\partial y^{\bar{v%
}}}\right] \right] ^{u}\right) .  \label{eqsemispS}
\end{eqnarray}

By a straightforward computation, based on the equality (\ref{eqsemisph}),
one can prove that the local functions $(S^{\bar{u}})$ verify the rule (\ref%
{formS}) on the intersection of compatible domains, giving by formula (\ref%
{locfS}) an almost transverse semi-spray $S$, called as \emph{canonically
associated} with $C$ and $L$. Using Proposition \ref{prextsemisptr} and the
above constructions, one have the following result.

\begin{proposition}
\label{prexs}If the lagrangian $L$ is $C$--regular, then the integral
curves, solutions of equations of motion of the extended nonholonomic
system, are the integral curves of a $C$--semispray $\mathcal{S}$.
\end{proposition}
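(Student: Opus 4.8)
The plan is to assemble the pieces already constructed in the section. By the discussion preceding equations \eqref{eqLc01} and \eqref{eqLc02}, the equation of motion of the extended nonholonomic system, written for the constrained lagrangian $L_c$, takes the form $-\delta L_c = 0$, which after comparing the two computations of $-\delta L_c$ is equivalent to the system \eqref{eqsemisph}. First I would record that \eqref{eqsemisph} is precisely $h_{\bar u\bar v}\,\dfrac{dy^{\bar v}}{dt} = F_{\bar u} + \dfrac{\partial L}{\partial y^u}\left[C_V,\left[C_V,\dfrac{\partial}{\partial y^{\bar u}}\right]\right]^u$, with $h_{\bar u\bar v}$ as in \eqref{dehnonl}. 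Since $L$ is assumed $C$--regular, $h=(h_{\bar u\bar v})$ is nondegenerate on the fibers of $VN\mathcal{F}$, so it is invertible and \eqref{eqsemisph} can be solved uniquely for the derivatives $\dfrac{dy^{\bar u}}{dt}$, yielding $\dfrac{dy^{\bar u}}{dt} = S^{\bar u}$ with $S^{\bar u}$ given by \eqref{eqsemispS}. Thus along any solution of the equations of motion, the base-velocity components satisfy $\dot x^{\bar u} = y^{\bar u}$ (this is built into the extended system, the curve living in $N\mathcal{F}$) and $\dot y^{\bar u} = S^{\bar u}(x,x^{\bar{}},y^{\bar{}})$.

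Next I would invoke the remark made just before the statement: a direct check based on \eqref{eqsemisph} shows that the locally defined functions $(S^{\bar u})$ transform according to rule \eqref{formS} on overlaps of adapted charts, hence by \eqref{locfS} they are the components of a globally well-defined almost transverse semi-spray $S \in \Gamma(N\widetilde{N\mathcal{F}})$, the one canonically associated with $C$ and $L$. The transformation check is the only place where a computation is genuinely needed, and it is exactly the point I expect to be the main obstacle: one must verify that the combination $h^{\bar u\bar v}(F_{\bar v} + \ldots)$ obeys the inhomogeneous tensorial law \eqref{formS}, using that $h$ is a genuine bilinear form on $VN\mathcal{F}$ (Proposition \ref{prdefh}), that $C_V$ transforms by \eqref{formC}, and that $F_{\bar u}$ from \eqref{deffub} carries the non-tensorial pieces that cancel against those coming from the double bracket term — all of which is implicit in the derivation of \eqref{eqsemisph} from the coordinate-free identity $-\delta L_c = -\delta L + (\text{tensor})$.

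Finally I would apply Proposition \ref{prextsemisptr} to the almost transverse semi-spray $S$ just obtained together with the nonlinear constraint $C: N\mathcal{F}\to TM$: this produces a unique vector field $\mathcal{S}\in\mathcal{X}(\widetilde{N\mathcal{F}})$ projecting onto $S$ and onto $C$, namely the $C$--semispray $\mathcal{S} = C_V + S^{\bar u}\dfrac{\partial}{\partial y^{\bar u}}$ of \eqref{eqlocsemisp}. Its integral curves $t\mapsto (x^u(t),x^{\bar u}(t),y^{\bar u}(t))$ are characterized by $\dot x^u = C^u$, $\dot x^{\bar u} = y^{\bar u}$, $\dot y^{\bar u} = S^{\bar u}$; the last two equations are exactly the constrained Lagrange equations in the form \eqref{eqsemispS}, while $\dot x^u = C^u$ is the constraint itself, which every motion of the extended nonholonomic system satisfies by definition. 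Hence the solution curves of the equations of motion coincide with the integral curves of $\mathcal{S}$, which is the assertion.
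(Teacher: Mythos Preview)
Your plan is essentially the paper's own argument: the paper states Proposition~\ref{prexs} as the immediate consequence of the derivation of \eqref{eqsemisph}--\eqref{eqsemispS}, the remark (in the paragraph just before the proposition) that the local functions $(S^{\bar u})$ obey the rule \eqref{formS} and hence assemble into an almost transverse semi-spray, and Proposition~\ref{prextsemisptr}. One small slip to correct: the equation of motion of the extended nonholonomic system is $-\delta L = 0$, not $-\delta L_c = 0$; the content of the computation leading to \eqref{eqLc01} is precisely that, \emph{along solutions} (where $-\delta L = 0$), the quantity $-\delta L_c$ equals the right-hand side of \eqref{eqLc01}, which is generally nonzero, and equating this with the independent expression \eqref{eqLc02} is what produces \eqref{eqsemisph}.
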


Notice that in the particular case of linear and affine constraints, using
formulas (\ref{formpsc}), then (\ref{eqconst02}) can be deduced from (\ref%
{eqsemispS}).

The Legendre map of $L$, $\mathcal{L}:TM\rightarrow T^{\ast }M$, or $%
\mathcal{L}\in \Gamma (\pi _{TM}^{\ast }T^{\ast }M)$, is given in local
coordinates by%
\begin{equation*}
\mathcal{L}=\dfrac{\partial L}{\partial y^{u}}dy^{u}+\dfrac{\partial L}{%
\partial y^{\bar{u}}}dy^{\bar{u}}.
\end{equation*}

The statement below contains the main result of the paper. It gives all the
equations of motion in the same form as in \cite{BKMM, Bl}, all in presence
of constraints adapted to a regular foliation.

\begin{theorem}
\label{thmain}Let $L:TM\rightarrow I\!\!R$ be a lagrangian, $C:N\mathcal{F}%
\rightarrow TM$ be a nonlinear constraint and $L_{c}=L\circ C$ be the
constrained lagrangian. If one of the following conditions holds:

\begin{enumerate}
\item $L$ is $C$--regular, or

\item $C$ is a linear constraint, or

\item $C$ is an affine constraint
\end{enumerate}

then the constrained Lagrange equations (\ref{eqlagc}) have the form%
\begin{equation*}
\delta L_{c}=\left\langle R,\mathcal{L}\right\rangle ,
\end{equation*}%
or, using local coordinates,%
\begin{equation*}
\dfrac{d}{dt}\dfrac{\partial L_{c}}{\partial y^{\bar{u}}}-\dfrac{\partial
L_{c}}{\partial x^{\bar{u}}}+C_{\bar{u}}^{u}\dfrac{\partial L_{c}}{\partial
x^{u}}=\dfrac{\partial L}{\partial y^{u}}R_{\bar{u}}^{u},
\end{equation*}%
where, in the first case, $R$ is the $S$--curvature of $C$ and $S$ is the almost transverse
semi-spray canonically associated, and, in the last two cases, $R$ is the
curvature of $C$.
\end{theorem}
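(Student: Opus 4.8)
The plan is to verify the three cases separately but show they all reduce to the same displayed identity, relying on the computations already assembled in the excerpt just before the statement.

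First, for the $C$--regular case, I would start from equation (\ref{eqsemisph}), which was derived by comparing (\ref{eqLc01}) and (\ref{eqLc02}) and which holds along any solution of the constrained Lagrange equations (\ref{eqlagc}). Since $L$ is $C$--regular, $h=(h_{\bar u\bar v})$ is invertible, so $dy^{\bar u}/dt=S^{\bar u}$ as in (\ref{eqsemispS}), and the remark following it guarantees that these local functions $(S^{\bar u})$ obey the transformation rule (\ref{formS}), hence define an almost transverse semi-spray $S$, and by Proposition \ref{prextsemisptr} a $C$--semispray $\mathcal{S}$. Then I would substitute $dy^{\bar v}/dt=S^{\bar v}$ back into (\ref{eqLc02}), or equivalently rewrite $-\delta L_c$ using (\ref{eqLc01}): the right-hand side becomes $\frac{\partial L}{\partial y^u}\bigl(S^{\bar v}\frac{\partial^2 C^u}{\partial y^{\bar u}\partial y^{\bar v}}-[C_V,[C_V,\frac{\partial}{\partial y^{\bar u}}]]^u\bigr)dx^{\bar u}$, and by the defining formula (\ref{defR}) for the $S$--curvature in Proposition \ref{prdefR} the parenthesis is exactly $R^u_{\bar u}$. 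This gives $\delta L_c=\langle R,\mathcal{L}\rangle$ in coordinates, and the coordinate-free version follows since $R\in L(VN\mathcal{F},T\mathcal{F}_{N\mathcal{F}})$ and $\mathcal{L}=\frac{\partial L}{\partial y^u}dy^u+\frac{\partial L}{\partial y^{\bar u}}dy^{\bar u}$ pair as claimed (only the $\partial L/\partial y^u$ components survive the contraction with the $T\mathcal{F}$-valued $R$).

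Second and third, for linear and affine constraints, the argument is more direct: these satisfy $\mathcal{C}=0$, so the ambiguity in Proposition \ref{prdefR} disappears and $R$ is the intrinsic curvature of $C$, given explicitly by (\ref{eqcurvL}), $R^u_{\bar u}=y^{\bar v}B^u_{\bar u\bar v}$, in the linear case and by (\ref{eqcurvA}), $R^u_{\bar u}=y^{\bar v}B^u_{\bar u\bar v}+\gamma^u_{\bar u}$, in the affine case. I would invoke (\ref{eqconst02}) — the constrained Lagrange equations written in terms of $L_c$, taken from \cite[Sect. 5.2]{Bl} and quoted in the excerpt — and simply observe that its right-hand side $C^u_{\bar u}\frac{\partial L_c}{\partial x^u}-\frac{\partial L}{\partial y^u}(B^u_{\bar u\bar v}y^{\bar v}+\gamma^u_{\bar u})$ rearranges, upon moving the $C^u_{\bar u}\partial L_c/\partial x^u$ term to the left, into exactly $-\frac{\partial L}{\partial y^u}R^u_{\bar u}$ with $R$ as above (with $\gamma=0$ in the linear case). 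Alternatively, and perhaps more cleanly for a unified presentation, I would note that since (\ref{eqconst02}) can be re-derived from (\ref{eqsemispS}) using (\ref{formpsc}) (as the excerpt remarks right after Proposition \ref{prexs}), the linear/affine cases are really instances of the first computation with $h$ reducing to the vertical hessian and $R_V$ becoming a genuine tensor; either route closes these cases.

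The main obstacle I anticipate is bookkeeping in the first case: making sure that the substitution of $dy^{\bar v}/dt = S^{\bar v}$ into (\ref{eqLc01})–(\ref{eqLc02}) produces precisely the combination $S^{\bar v}\frac{\partial^2 C^u}{\partial y^{\bar u}\partial y^{\bar v}}-[C_V,[C_V,\partial/\partial y^{\bar u}]]^u$ appearing in (\ref{defR}), rather than something off by the $h_{\bar u\bar v}$ terms — this is really the content of passing from (\ref{eqsemisph}) to (\ref{eqsemispS}) and back, and needs the definition (\ref{dehnonl}) of $h$ to be tracked carefully. A secondary point requiring a line of justification is the tensoriality/well-definedness of both sides of $\delta L_c=\langle R,\mathcal{L}\rangle$: that $\delta L_c$ is the intrinsic object $\frac{d}{dt}\frac{\partial L_c}{\partial y^{\bar u}}-\frac{\partial L_c}{\partial x^{\bar u}}+C^u_{\bar u}\frac{\partial L_c}{\partial x^u}$ valued in $T^\ast\mathcal{F}_{N\mathcal{F}}$, and that $\langle R,\mathcal{L}\rangle$ makes sense as the contraction of $R\in L(VN\mathcal{F},T\mathcal{F}_{N\mathcal{F}})$ with the Legendre section — both follow from the globality statements already established (Proposition \ref{prdefR} for $R$, and the transformation rule (\ref{formS}) for $S$), so no new geometry is needed, only an assembling of these facts.
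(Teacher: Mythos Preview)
Your proposal is correct and follows exactly the route the paper intends: the theorem has no separate proof block in the paper because it is a synthesis of the preceding computations, and you have correctly identified that one substitutes $dy^{\bar v}/dt=S^{\bar v}$ from (\ref{eqsemispS}) into (\ref{eqLc01}) and reads off $R^u_{\bar u}$ via (\ref{defR}) in the $C$--regular case, while the linear and affine cases follow immediately from (\ref{eqconst02}) together with (\ref{eqcurvL})--(\ref{eqcurvA}). Your remarks on the bookkeeping obstacle and on the globality of both sides are accurate and match what the paper establishes in Propositions \ref{prdefR} and \ref{prexs}.
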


In the case of time dependent constraints (as in \cite{LiBe}, see Example \ref{e3}
in the next section), but a time independent lagrangian, the equations of
motion are obtained in the same way as equations (\ref{eqLc01}), taking into
account the fact that the constraints are time dependent, but the lagrangian
is not. One  obtains that the equations (\ref{eqLc01}) are replaced by 
\begin{equation}
-\delta L_{c}=\left( \dfrac{\partial ^{2}C^{u}}{\partial y^{\bar{u}}\partial
y^{\bar{v}}}\dfrac{dy^{\bar{v}}}{dt}\frac{\partial L}{\partial y^{u}}+\dfrac{%
\partial ^{2}C^{u}}{\partial t\partial y^{\bar{u}}}\dfrac{\partial L}{%
\partial y^{u}}-\dfrac{\partial L}{\partial y^{u}}\left[ C_{V},\left[ C_{V},%
\dfrac{\partial }{\partial y^{\bar{u}}}\right] \right] ^{u}\right) dx^{\bar{u%
}}.  \label{eqLc01a}
\end{equation}%
and the equations (\ref{eqsemisph}) are valid in the same form, but with%
\begin{equation}
F_{\bar{u}}=-\dfrac{\partial ^{2}C^{u}}{\partial t\partial y^{\bar{u}}}\dfrac{%
\partial L}{\partial y^{u}}-\dfrac{\partial ^{2}L_{c}}{\partial t\partial y^{%
\bar{u}}}+\dfrac{\partial ^{2}L_{c}}{\partial y^{\bar{u}}\partial x^{\bar{v}}%
}y^{\bar{v}}-\dfrac{\partial L_{c}}{\partial x^{\bar{u}}}-C_{\bar{u}}^{u}%
\dfrac{\partial L_{c}}{\partial x^{u}}.  \label{deffub-a}
\end{equation}

Let us consider the case when the foliation on $M$ is simple, given by a
fibered manifold $\pi :M\rightarrow M^{\prime }$. Then $h=\left( h_{\bar{u}%
\bar{v}}\right) $ gives a global bilinear form in the fibers of $VNM\mathcal{%
=}\pi _{NM}^{\ast }NM$; according to Proposition \ref{prdefh}, $h=\left(
D^{\prime \prime }\right) ^{\ast }H_{L}$, where $D^{\prime \prime }$ is the
right splitting map in the exact sequence (\ref{exseq011s}) and $H_{L}$ is
the vertical hessian of $L$, restricted to $NM$, as the image of the
constraint map $C:NM\rightarrow TM$. The condition that $L$ is $C$--regular
reads that $h$ is non-degenerate. In the simple foliation case, 2. and 3. of
Theorem \ref{thmain} are just the cases of linear and affine constraints
studied in \cite{BKMM, Bl}. But 1. of Theorem \ref{thmain} is new also in
the simple foliation case. It asserts in fact that if the lagrangian $L$ is $%
C$--regular, then the formal equations from \cite{BKMM, Bl} have the same
form in this case.

\section{Examples}

\begin{example}
\label{e1}
We consider the case of Appell's linear constraints.

The lagrangian is%
\begin{equation*}
L=\dfrac{1}{2}%
\alpha \left( \left( y^{1}\right) ^{2}+\left( y^{2}\right) ^{2}\right) +%
\dfrac{1}{2}\beta \left( y^{3}\right) ^{2}+\dfrac{1}{2}I_{1}\left( y^{\bar{1}%
}\right) ^{2}+\dfrac{1}{2}I_{2}\left( y^{\bar{2}}\right) ^{2}+\gamma x^{3}.
\end{equation*}%
and the constraints are given by (\ref{exlinAppC}). The induced lagrangian
has the form%
\begin{eqnarray*}
L_{c}(x^{1},x^{2},x^{3},x^{\bar{1}},x^{\bar{2}},y^{\bar{1}},y^{\bar{2}}) &=&%
\dfrac{1}{2}\left( I_{1}+\alpha R^{2}+\beta r^{2}\right) \left( y^{\bar{1}%
}\right) ^{2}+\dfrac{1}{2}I_{2}\left( y^{\bar{2}}\right) ^{2}+\gamma x^{3} \\
&=&\dfrac{1}{2}\alpha ^{\prime \prime }\left( y^{\bar{1}}\right) ^{2}+\dfrac{%
1}{2}I_{2}\left( y^{\bar{2}}\right) ^{2}+\gamma x^{3}.
\end{eqnarray*}

Using formulas (\ref{CurvApp}) and (\ref{eqcurvL}), we have%
\begin{eqnarray*}
R_{\bar{1}}^{1} &=&B_{\bar{1}\bar{2}}^{1}y^{\bar{2}}=-Ry^{\bar{2}}\sin x^{%
\bar{2}},\ R_{\bar{2}}^{1}=B_{\bar{2}\bar{1}}^{1}y^{\bar{1}}=Ry^{\bar{1}%
}\sin x^{\bar{2}}, \\
R_{\bar{1}}^{2} &=&B_{\bar{1}\bar{2}}^{2}y^{\bar{2}}=Ry^{\bar{2}}\cos x^{%
\bar{2}},\ R_{\bar{2}}^{2}=B_{\bar{2}\bar{1}}^{2}y^{\bar{1}}=-Ry^{\bar{1}%
}\cos x^{\bar{2}}, \\
R_{\bar{1}}^{3} &=&B_{\bar{1}\bar{2}}^{3}y^{\bar{2}}=0,R_{\bar{2}}^{3}=B_{\bar{2}\bar{1}}^{3}y^{\bar{1}}=0.
\end{eqnarray*}%
In this case $(h_{\bar{u}\bar{v}})$ given by formula (\ref{dehnonl}) %%
is  minus the  hessian of $L_{c}$. Formula (\ref{eqsemispS}) gives $S^{\bar{u}}=\dfrac{dy^{%
\bar{u}}}{dt}=-h^{\bar{u}\bar{v}}C_{\bar{v}}^{3}\dfrac{\partial L_{c}}{\partial x^{3}}%
$; by a straightforward computation, one  obtains 
\begin{equation*}
\dfrac{dy^{\bar{1}}}{dt}=-\alpha ^{\prime \prime }r,~\dfrac{dy^{\bar{2}}}{dt}%
=0.
\end{equation*}%
Since $y^{\bar{u}}=\dfrac{dy^{\bar{u}}}{dt}$, we obtain $x^{\bar{1}}=-\dfrac{%
r\alpha ^{\prime \prime }t^{2}}{2}+y_{0}^{\bar{1}}t+x_{0}^{\bar{1}}$, $x^{%
\bar{2}}=y_{0}^{\bar{2}}t+x_{0}^{\bar{2}}$; using constraint equations (\ref%
{exlinAppC}), one obtain $x^{1}=-R\left( r\alpha ^{\prime \prime }t+y_{0}^{%
\bar{1}}\right) \cos \left( y_{0}^{\bar{2}}t+x_{0}^{\bar{2}}\right)
,x^{2}=-R\left( r\alpha ^{\prime \prime }t+y_{0}^{\bar{1}}\right) \sin
\left( y_{0}^{\bar{2}}t+x_{0}^{\bar{2}}\right) ,C^{3}=-r\left( r\alpha
^{\prime \prime }t+y_{0}^{\bar{1}}\right) $.
\end{example}

\begin{example}
\label{e2}
We consider the case of Appell's nonlinear constraints,
with lagrangian as, for example, in \cite{LMD, ILMD}:%
\begin{equation}
L(x^{1},x^{\bar{1}},x^{\bar{2}},y^{1},y^{\bar{1}},y^{\bar{2}})=\dfrac{\beta 
}{2}\left( \left( y^{\bar{1}}\right) ^{2}+\left( y^{\bar{2}}\right)
^{2}\right) +\dfrac{\gamma }{2}\left( y^{1}\right) ^{2}+\delta x^{1}.
\label{Lag-AH}
\end{equation}%
and the constraints (\ref{con-nelApp}). The induced lagrangian has the form%
\begin{equation*}
L_{c}(x^{1},x^{\bar{1}},x^{\bar{2}},y^{\bar{1}},y^{\bar{2}})=\dfrac{\beta
+\alpha^2 \gamma }{2}\left( \left( y^{\bar{1}}\right) ^{2}+\left( y^{\bar{2}%
}\right) ^{2}\right) +\delta x^{1},
\end{equation*}%
thus formula (\ref{eqsemispS}) gives $S^{\bar{u}}=\dfrac{dy^{\bar{u}}}{dt}%
=-h^{\bar{u}\bar{v}}C_{\bar{v}}^{1}\dfrac{\partial L_{c}}{\partial x^{1}}$, where $%
(h^{\bar{u}\bar{v}})=(h_{\bar{u}\bar{v}})^{-1}$ and $(h_{\bar{u}\bar{v}})$
is given by formula (\ref{dehnonl}). By a straightforward computation, one obtains 
\begin{equation*}
\dfrac{dy^{\bar{u}}}{dt}=\dfrac{\alpha ^{\prime }y^{\bar{u}}}{\sqrt{\left(
y^{\bar{1}}\right) ^{2}+\left( y^{\bar{2}}\right) ^{2}}},
\end{equation*}%
where $\alpha ^{\prime }=-\dfrac{\alpha }{2\left( \gamma \alpha ^{2}+\beta
\right) }$. Using polar coordinates $y^{\bar{1}}=\rho \cos \varphi $, $y^{%
\bar{2}}=\rho \sin \varphi $, it follows that $\dfrac{d\rho }{dt}=\alpha $, $%
\dfrac{d\varphi }{dt}=0$, thus $\rho =\alpha ^{\prime }t+\rho _{0}$, $%
\varphi =\varphi _{0}$. Since $y^{\bar{u}}=\dfrac{dx^{\bar{u}}}{dt}$, one
have $x^{\bar{1}}=\left( \dfrac{\alpha ^{\prime }t^{2}}{2}+\rho _{0}t\right)
\cos \varphi _{0}+x_{0}^{\bar{1}}$, $x^{\bar{2}}=\left( \dfrac{\alpha
^{\prime }t^{2}}{2}+\rho _{0}t\right) \sin \varphi _{0}+x_{0}^{\bar{2}}$, $%
x^{1}=\pm \alpha \left( \dfrac{\alpha ^{\prime }t^{2}}{2}+\rho _{0}t\right)
+x_{0}^{1}$. The solutions of the constrained Lagrange equations are
straight lines; but this is not physically correct in the case of Appell
machine (see \cite{Ma}).
\end{example}

\begin{example}
\label{e3}
In the Appell-Hammel dynamic system in an elevator,
 the constraints are (\ref{ex-AH}) and one takes the lagrangian (\ref{Lag-AH}), as in Example \ref{e2}; 
thus Example \ref{e2} is a particular case of this example, when $%
v^{0}(t)=0$. Using Proposition \ref{primpl}, we can infer at this stage
that, concerning the solution of the equation of motion, one obtains the same
result as in \cite[Section 3.2]{LiBe}. Indeed, we have 
\begin{equation*}
L_{c}=\dfrac{\beta +\alpha ^{2}\gamma }{2}\left( \left( y^{\bar{1}}\right)
^{2}+\left( y^{\bar{2}}\right) ^{2}\right) +\gamma v^{\left( 0\right) }\sqrt{%
\left( y^{\bar{1}}\right) ^{2}+\left( y^{\bar{2}}\right) ^{2}}+\delta x^{1}+%
\dfrac{1}{2}\left( v^{\left( 0\right) }\right) ^{2},
\end{equation*}%
then we have that the pseudo-curvature $R_{V}=\dfrac{\partial L}{\partial
y^{u}}\left[ C_{V},\left[ C_{V},\dfrac{\partial }{\partial y^{\bar{u}}}%
\right] \right] ^{u}=0$ and 
\begin{eqnarray*}
\left( F_{\bar{u}}\right)  &=&\left( -\dfrac{\left( \alpha \delta +\gamma 
\dot{v}^{(1)}\right) y^{\bar{u}}}{\sqrt{\left( y^{\bar{1}}\right)
^{2}+\left( y^{\bar{2}}\right) ^{2}}}\right) , \\
\left( \dfrac{\partial L}{\partial y^{1}}\dfrac{\partial ^{2}C^{1}}{\partial
y^{\bar{u}}\partial y^{\bar{v}}}-\dfrac{\partial ^{2}L_{c}}{\partial y^{\bar{%
u}}\partial y^{\bar{v}}}\right) ^{-1}\left( F_{\bar{u}}\right) ^{t}
&=&\left( \dfrac{\alpha \left( \alpha \delta +\gamma \dot{v}^{(1)}\right) y^{%
\bar{v}}}{\alpha ^{2}\gamma +\beta }\right) ,
\end{eqnarray*}%
thus formula (\ref{eqsemispS}) gives%
\begin{equation*}
\ddot{x}^{\bar{u}}=\dot{y}^{\bar{u}}=\dfrac{\left( \alpha \delta +\gamma 
\dot{v}^{(1)}\right) y^{\bar{u}}}{\sqrt{\left( y^{\bar{1}}\right)
^{2}+\left( y^{\bar{2}}\right) ^{2}}},
\end{equation*}%
then%
\begin{equation*}
\dfrac{d}{dt}\sqrt{\left( \dot{x}^{\bar{1}}\right) ^{2}+\left( \dot{x}^{\bar{%
2}}\right) ^{2}}=\dfrac{\ddot{x}^{\bar{1}}\dot{x}^{\bar{1}}+\ddot{x}^{\bar{2}%
}\dot{x}^{\bar{2}}}{\sqrt{\left( \dot{x}^{\bar{1}}\right) ^{2}+\left( \dot{x}%
^{\bar{2}}\right) ^{2}}}=\left( \alpha \delta +\gamma \dot{v}^{(1)}\right) =%
\dfrac{d}{dt}\left( \alpha \delta t +\gamma v^{(1)}\right) 
\end{equation*}%
and one obtains the solution as in \cite[Section 3.2]{LiBe}.
\end{example}

\begin{example}
\label{e4}
The following example fits in the case of equations of
Benenti mechanism \cite{Ben} (see also \cite{KO}). Consider the foliation of 
$I\!\!R^{4}$ with coordinates $(x_{1},x_{2},y_{1},y_{2})$ generated by $%
\dfrac{\partial }{\partial x_{1}}$. Denote $x_{2}=x^{\bar{1}}$, $y_{1}=x^{%
\bar{2}}$, $y_{2}=x^{\bar{3}}$ and $x_{1}=x^{1}$ and consider the nonlinear
constraint given by the implicit equation $y^{1}y^{\bar{3}}-y^{\bar{1}}y^{%
\bar{2}}=0$, $\left( y^{1}\right) ^{2}+\left( y^{\bar{1}}\right) ^{2}+\left(
y^{\bar{2}}\right) ^{2}+\left( y^{\bar{3}}\right) ^{2}\neq 0$, where $%
(x^{1},x^{\bar{1}},x^{\bar{2}},x^{\bar{3}},y^{1},y^{\bar{1}},y^{\bar{2}},y^{%
\bar{3}})$ are coordinates on $TI\!\!R^{4}$. We have $C^{1}\left( x^{1},x^{%
\bar{1}},x^{\bar{2}},x^{\bar{3}},y^{\bar{1}},y^{\bar{2}},y^{\bar{3}}\right) =%
\dfrac{y^{\bar{1}}y^{\bar{2}}}{y^{\bar{3}}}$. Formula (\ref{formDsec}) gives 
\begin{equation*}
X^{1}\dfrac{\partial }{\partial x^{1}}+X^{\bar{1}}\dfrac{\partial }{\partial
x^{\bar{1}}}+X^{\bar{2}}\dfrac{\partial }{\partial x^{\bar{2}}}+X^{\bar{3}}%
\dfrac{\partial }{\partial x^{\bar{3}}}\overset{C^{\prime }}{\rightarrow }%
{}\left( X^{1}+\dfrac{X^{\bar{1}}y^{\bar{2}}y^{\bar{3}}+X^{\bar{2}}y^{\bar{1}%
}y^{\bar{3}}-X^{\bar{3}}y^{\bar{1}}y^{\bar{2}}}{\left( y^{\bar{3}}\right)
^{2}}\right) \dfrac{\partial }{\partial x^{1}}.
\end{equation*}%
 It can be seen that $R_{V}=0$, only using euclidean coordinates.

 Let us consider the lagrangian%
\begin{equation*}
L(x^{1},x^{\bar{1}},x^{\bar{2}},y^{1},y^{\bar{1}},y^{\bar{2}})=\dfrac{\alpha 
}{2}\left( \left( y^{1}\right) ^{2}+\left( y^{\bar{1}}\right) ^{2}\right) +%
\dfrac{\beta }{2}\left( \left( y^{\bar{2}}\right) ^{2}+\left( y^{\bar{3}%
}\right) ^{2}\right) +f(x^{1},x^{\bar{1}},x^{\bar{2}},x^{\bar{3}}),
\end{equation*}%
that has the kinetic energy as the original \cite{Ben}, or as in \cite{ILMD,
KO}. The induced lagrangian has the form%
\begin{equation*}
L_{c}(x^{1},x^{\bar{1}},x^{\bar{2}},y^{\bar{1}},y^{\bar{2}})=\dfrac{\alpha
\left( y^{\bar{1}}\right) ^{2}}{2\left( y^{\bar{3}}\right) ^{2}}\left(
\left( y^{\bar{2}}\right) ^{2}+\left( y^{\bar{3}}\right) ^{2}\right) +\dfrac{%
\beta }{2}\left( \left( y^{\bar{2}}\right) ^{2}+\left( y^{\bar{3}}\right)
^{2}\right) +f(x^{1},x^{\bar{1}},x^{\bar{2}},x^{\bar{3}}).
\end{equation*}%
Formula (\ref{deffub}) gives $F_{\bar{1}}=-f,_{\bar{1}}+f,_{1}\dfrac{y^{\bar{%
2}}y^{\bar{3}}}{\left( y^{\bar{1}}\right) ^{2}}$, $F_{\bar{2}}=-f,_{\bar{2}%
}-f,_{1}\dfrac{y^{\bar{3}}}{y^{\bar{1}}}$, $F_{\bar{3}}=-f,_{\bar{3}}-f,_{1}%
\dfrac{y^{\bar{2}}}{y^{\bar{1}}}$, where $f,_{1}=\dfrac{\partial f}{\partial
x^{1}}$ and $f,_{\bar{u}}=\dfrac{\partial f}{\partial x^{\bar{u}}}$. Then 
\begin{equation*}
\left( h^{\bar{u}\bar{v}}\right) =\left( 
\begin{array}{ccc}
{\scriptsize -}\tfrac{\left( y^{\bar{1}}\right) ^{2}\left( \left( y^{\bar{1}%
}\right) ^{2}\alpha \beta +\left( y^{\bar{1}}\right) ^{2}\beta ^{2}+\left(
y^{\bar{2}}\right) ^{2}\alpha ^{2}+\left( y^{\bar{2}}\right) ^{2}\alpha
\beta +\left( y^{\bar{3}}\right) ^{2}\alpha \beta \right) }{\left( y^{\bar{2}%
}y^{\bar{3}}\right) ^{2}\alpha \beta \left( \alpha +\beta \right) } & 
{\scriptsize -}\tfrac{y^{\bar{1}}}{y^{\bar{2}}\left( \alpha +\beta \right) }
& {\scriptsize -}\tfrac{y^{\bar{1}}}{y^{\bar{3}}\beta } \\ 
{\scriptsize -}\tfrac{y^{\bar{1}}}{y^{\bar{2}}\left( \alpha +\beta \right) }
& {\scriptsize -}\tfrac{1}{\alpha +\beta } & {\scriptsize 0} \\ 
{\scriptsize -}\tfrac{y^{\bar{1}}}{y^{\bar{3}}\beta } & {\scriptsize 0} & 
{\scriptsize -}\tfrac{1}{\beta }%
\end{array}%
\right) 
\end{equation*}%
and formula (\ref{eqsemispS}) gives 
\begin{equation*}
S^{\bar{u}}=\dfrac{dy^{\bar{u}}}{dt}=h^{\bar{u}\bar{v}}F_{\bar{v}}.
\end{equation*}%
In the Benenti original example \cite{Ben}, the potential $f$ vanishes, then 
$S^{\bar{u}}=0$, thus the integral curves are straight lines.
\end{example}

\begin{example}
\label{e5}
The following example is the Marle servomechanism \cite%
{Ma0} (see also \cite{KO}), where the Chetaev principle is claimed to fail
in the real world. Consider the foliation of $I\!\!R^{2}$ with coordinates $%
(x^{1},x^{\bar{1}})$ generated by $\dfrac{\partial }{\partial x^{1}}$ and
consider the nonlinear constraint given by 
\begin{equation*}
y^{1}=f(x^{1},x^{\bar{1}},y^{\bar{1}}).
\end{equation*}%
We have $C^{3}\left( x^{1},x^{\bar{1}},y^{\bar{1}}\right) =f(x^{1},x^{\bar{1}%
},y^{\bar{1}})$. Formula (\ref{formDsec}) gives 
\begin{equation*}
X^{1}\dfrac{\partial }{\partial x^{1}}+X^{\bar{1}}\dfrac{\partial }{\partial
x^{\bar{1}}}\overset{C^{\prime }}{\rightarrow }{}\left( X^{1}+X^{\bar{1}}%
\dfrac{\partial f}{\partial y^{\bar{1}}}\right) \dfrac{\partial }{\partial
x^{1}}.
\end{equation*}%
 Then $R_{V}=$ $\left[ \dfrac{\partial }{\partial y^{\bar{1}}},\left[ 
\dfrac{\partial }{\partial y^{\bar{1}}},f\dfrac{\partial }{\partial x^{1}}%
+y^{\bar{1}}\dfrac{\partial }{\partial x^{\bar{1}}}\right] \right] =\dfrac{%
\partial ^{2}f}{\partial \left( y^{\bar{1}}\right) ^{2}}\dfrac{\partial }{%
\partial x^{1}}$.

 One can consider the lagrangian%
\begin{equation*}
L(x^{1},x^{\bar{1}},y^{1},y^{\bar{1}})=\dfrac{m}{2}\left( \left(
y^{1}\right) ^{2}-2ly^{1}y^{\bar{1}}\sin x^{\bar{1}}\right) +\dfrac{J}{2}%
\left( y^{\bar{1}}\right) ^{2}-mgl\sin x^{\bar{1}},
\end{equation*}%
thus%
\begin{equation*}
L_{c}(x^{1},x^{\bar{1}},y^{\bar{1}})=\dfrac{m}{2}\left( f^{2}-2lfy^{\bar{1}%
}\sin x^{\bar{1}}\right) +\dfrac{J}{2}\left( y^{\bar{1}}\right) ^{2}-mgl\sin
x^{\bar{1}},
\end{equation*}

If one  considers the
induced foliation (generated by $\dfrac{\partial }{\partial x^{1}}$) 
on $V=I\!\!R^{2}\backslash \{(x^{1},0)|x^{1}\geq 0\}$ , it can be seen that
this is not a locally trivial one and the space of leaves is not Hausdorff
separated, thus the use of a foliation is justified in this case from a
global viewpoint.
\end{example}

\begin{example}
\label{e6}
Consider a riemannian flow, i.e. a one-dimensional
riemannian foliation (see \cite{Mo}), on a manifold $M$. It means that the following
are given on $M$:

\begin{description}
\item- a one-dimensional distribution (given, for example, by a non-vanishing
vector field $X_{0}$) and

\item- a bundle-like metric $g$.
\end{description}

Using local coordinates $(x^{1},x^{\bar{u}})$ on $M$, adapted to the
foliation, the vector field $X_{0}$ and the quadratic lagrangian of the
metric $g$ have the local forms 
\begin{eqnarray*}
X_{0} &=&X^{0}\left( x^{1},x^{\bar{u}}\right) \dfrac{\partial }{\partial
x^{1}}, \\
L\left( x^{1},x^{\bar{u}},y^{1},y^{\bar{u}}\right) &=&\dfrac{1}{2}%
g_{0}\left( x^{1},x^{\bar{u}}\right) ^{2}\left( y^{1}\right) ^{2}+\dfrac{1}{2%
}g_{\bar{u}\bar{v}}\left( x^{\bar{u}}\right) y^{\bar{u}}y^{\bar{v}}.
\end{eqnarray*}%
We notice that $g_{0}$ can be taken $1$ only in the case when the orthogonal
distribution to the riemannian flow is integrable.

Let us consider a quadratic time-dependent constraint of the form $%
L_{0}\left( x^{1},x^{\bar{u}},y^{1},y^{\bar{u}}\right) =\dfrac{1}{2}\varphi
\left( t\right) $.

It is easy to see that the lagrangian induced on the constraint manifold is $%
L_{c}=\dfrac{1}{2}\varphi \left( t\right) $. It is completely degenerated,
and we show how the constraint machinery works in this case. We have%
\begin{equation*}
C\left( x^{1},x^{\bar{u}},y^{\bar{u}}\right) =\dfrac{1}{g_{0}}\sqrt{\varphi
\left( t\right) -g_{\bar{u}\bar{v}}\left( x^{\bar{u}}\right) y^{\bar{u}}y^{%
\bar{v}}}=\dfrac{g}{g_{0}}.
\end{equation*}%
As usually in the presence of a metric, denote $y_{\bar{u}}=g_{\bar{u}\bar{v}%
}y^{\bar{v}}$, thus $g_{\bar{u}\bar{v}}\left( x^{\bar{u}}\right) y^{\bar{u}%
}y^{\bar{v}}=y_{\bar{v}}y^{\bar{v}}$. By a straightforward computation, we
obtain%
\begin{equation*}
\dfrac{\partial C}{\partial y^{\bar{u}}}=-\dfrac{y_{\bar{u}}}{gg_{0}},\dfrac{%
\partial ^{2}C}{\partial t\partial y^{\bar{u}}}=\dfrac{\varphi ^{\prime
}\left( t\right) y_{\bar{u}}}{2g^{3}g_{0}},\;\dfrac{\partial ^{2}C}{\partial
y^{\bar{u}}\partial y^{\bar{v}}}=-\dfrac{y_{\bar{u}}y_{\bar{v}}+g^{2}g_{\bar{u%
}\bar{v}}}{g_{0}g^{3}}.
\end{equation*}%
The matrix $\left( \dfrac{\partial ^{2}C}{\partial y^{\bar{u}}\partial y^{%
\bar{v}}}\right) $ has as inverse 
\begin{equation*}
\left( \dfrac{\partial ^{2}C}{\partial y^{\bar{u}}\partial y^{\bar{v}}}%
\right) ^{-1}=-\left( g_{0}gg^{\bar{u}\bar{v}}-\dfrac{g_{0}gy^{\bar{u}}y^{%
\bar{v}}}{\varphi }\right) .
\end{equation*}

The formulas (\ref{eqcurvat}), (\ref{deffub-a}) and (\ref{eqsemisph}) give
in this case:
\begin{eqnarray*}
&&R_{\bar{u}}^{1}=C\dfrac{\partial ^{2}C}{\partial x^{1}\partial y^{\bar{u}}}%
+y^{\bar{v}}\dfrac{\partial ^{2}C}{\partial x^{\bar{v}}\partial y^{\bar{u}}}-%
\dfrac{\partial C}{\partial x^{\bar{u}}}-\dfrac{\partial C}{\partial y^{\bar{%
u}}}\dfrac{\partial C}{\partial x^{1}},\\
&&F_{\bar{u}}=-\dfrac{\partial ^{2}C}{\partial t\partial y^{\bar{u}}}\dfrac{%
\partial L}{\partial y^{1}}-\dfrac{\partial ^{2}L_{c}}{\partial t\partial y^{%
\bar{u}}}+\dfrac{\partial ^{2}L_{c}}{\partial y^{\bar{u}}\partial x^{\bar{v}}%
}y^{\bar{v}}-\dfrac{\partial L_{c}}{\partial x^{\bar{u}}}-C_{\bar{u}}^{1}%
\dfrac{\partial L_{c}}{\partial x^{1}}=-\dfrac{\partial ^{2}C}{\partial
t\partial y^{\bar{u}}}\dfrac{\partial L}{\partial y^{1}}=-\dfrac{%
Cg_{0}\varphi ^{\prime }\left( t\right) y_{\bar{u}}}{2g^{3}},\\
&&\left( \dfrac{\partial L}{\partial y^{1}}\dfrac{\partial ^{2}C}{\partial y^{%
\bar{u}}\partial y^{\bar{v}}}-\dfrac{\partial ^{2}L_{c}}{\partial y^{\bar{u}%
}\partial y^{\bar{v}}}\right) \dfrac{dy^{\bar{v}}}{dt}-F_{\bar{u}}+\dfrac{%
\partial L}{\partial y^{1}}R_{\bar{u}}^{1}=Cg_0^2\dfrac{\partial ^{2}C}{\partial
y^{\bar{u}}\partial y^{\bar{v}}}\dfrac{dy^{\bar{v}}}{dt}+\dfrac{%
Cg_{0}\varphi ^{\prime }\left( t\right) y_{\bar{u}}}{2g^{3}}+Cg_0^2 R_{\bar{u}%
}^{1}=0.
\end{eqnarray*}

Thus 
\begin{equation*}
\dfrac{dy^{\bar{v}}}{dt}=-\left( g_0gg^{\bar{u}\bar{v}}-\dfrac{g_0gy^{\bar{u}}y^{%
\bar{v}}}{\varphi }\right) \left(- \dfrac{\varphi ^{\prime }\left(
t\right) y_{\bar{u}}}{2g^{3}g_0}-R_{\bar{u}}^{1}\right) ,
\end{equation*}%
or 
\begin{equation*}
\dfrac{dy^{\bar{v}}}{dt}=\dfrac{y^{\bar{v}}\varphi ^{\prime }}{2\varphi 
}+g_0g\left( g^{\bar{u}\bar{v}}-\dfrac{y^{\bar{u}}y^{\bar{v}}}{%
\varphi }\right) R_{\bar{u}}^{1}.
\end{equation*}%
In the case when the curvature coefficients vanish, we have%
\begin{equation*}
\dfrac{dy^{\bar{v}}}{dt}=\dfrac{y^{\bar{v}}\varphi ^{\prime }}{2\varphi 
}.
\end{equation*}%
It follows that $y^{\bar{v}}=c_{\bar{v}}\sqrt{\varphi }$. Since $y^{\bar{v}}=\dfrac{dx^{\bar{v}}}{dt}$, it follows
that the trajectories have the local form%
\begin{equation*}
x^{\bar{v}}=c_{\bar{v}}\Phi \left( t\right) +d_{\bar{v}},
\end{equation*}%
where $c_{\bar{v}}$ and $d_{\bar{v}}$ are constants and $\Phi \left(
t\right) \in \int \sqrt{\varphi \left( t\right) }dt$, i.e. a primitive of $%
\sqrt{\varphi \left( t\right)}$.

This applies in the particular case of a decelerated motion of a free
particle, studied initially in \cite{Kr} and then in \cite{Sw}, where the
explicit form of the trajectories are obtained. In this case $\varphi \left(
t\right) =\frac{1}{t}$, when $\Phi =2\sqrt{t}$.

Let us notice that almost the same computations hold in the case of a
pseudo-riemannian metric. In particular, when the signature of $g$ is $%
(1,m-1)$ and the pseudo-riemannian flow is in the positive direction, then the
computations are similar, the initial data having some changes of signs.
\end{example}

\section*{Acknowledgement} The authors cordially thank to anonymous referees for several useful remarks and suggestions about the initial submission which improve  this paper.

\noindent Paul Popescu\newline
Department of Applied Mathematics, University of Craiova\newline
Address: Craiova, 200585, Str. Al. Cuza, No. 13, Rom\^{a}nia\newline
email:\textit{paul$_{-}$p$_{-}$popescu@yahoo.com} \medskip

\noindent Cristian Ida\newline
Department of Mathematics and Informatics, University Transilvania of Bra%
\c{s}ov\newline
Address: Bra\c{s}ov 500091, Str. Iuliu Maniu 50, Rom\^{a}nia\newline
email:\textit{cristian.ida@unitbv.ro}

\end{document}